\newcommand{\R}{\mathbb R}
\definecolor{darkred}{rgb}{1, 0.1, 0.3}
\definecolor{darkblue}{rgb}{0, 0, 1.0}
\newcommand{\dW}        {{\mathrm{d^{per}_{W,1}}}}
\newcommand{\dOT}       {{\mathrm{d_{OT}}}}
\newcommand{\perD}      {{\mathsf{P}}} 
\newcommand{\anotherperD}   {{\mathsf{Q}}} 
\newcommand{\aL}        {{\mathcal{L}}}
\newcommand{\XX}        {{\mathrm{X}}}
\newcommand{\reals}     {{\mathbb{R}}}
\newcommand{\dgm}       {{\mathsf{dgm}}} 
\newcommand{\myaug}     {{\mathsf{aug}}}
\newcommand{\eps}       {{\varepsilon}}
\newcommand{\mysupp}    {{\mathsf{supp}}}
\newcommand{\aT}        {{\mathrm{T}}}
\newcommand{\aV}        {{\mathrm{V}}}
\newcommand{\Gflow}     {{f^*_G}}
\newcommand{\dTper}     {{\widehat{d}_{L_1}}}
\newcommand{\greedyM}   {{\widehat{\Gamma}}}
\newcommand{\dOTtree}   {{\mathrm{d_{OT,d_T}}}}
\newcommand{\dOTflow}   {{\mathrm{d_{OT}^{flow}}}}
\title{Approximation algorithms for 1-Wasserstein distance between persistence diagrams} 
\titlerunning{Approx. Algorithms for $W_1$ distance} 
\author{Samantha Chen}{University of California - San Diego, USA  }{sac003@ucsd.edu}{}{}
\author{Yusu Wang}{University of California - San Diego, USA}{yusuwang@ucsd.edu}{}{}
\authorrunning{S. Chen and Y. Wang} 
\keywords{persistence diagrams, approximation algorithms, Wasserstein distance, optimal transport} 
\begin{document}

\maketitle

\begin{abstract}
Recent years have witnessed a tremendous growth using topological summaries, especially the persistence diagrams (encoding the so-called persistent homology) for analyzing complex shapes. 
Intuitively, persistent homology maps a potentially complex input object (be it a graph, an image, or a point set and so on) to a unified type of feature summary, called the persistence diagrams. 
One can then carry out downstream data analysis tasks using such persistence diagram representations. 
A key problem is to compute the distance between two persistence diagrams efficiently. 
In particular, a persistence diagram is essentially a multiset of points in the plane, and one popular distance is the so-called 1-Wasserstein distance between persistence diagrams.
In this paper, we present two algorithms to approximate the 1-Wasserstein distance for persistence diagrams in near-linear time. These algorithms primarily follow the same ideas as two existing algorithms to approximate optimal transport between two finite point-sets in Euclidean spaces via randomly shifted quadtrees. We show how these algorithms can be effectively adapted for the case of persistence diagrams. Our algorithms are much more efficient than previous exact and approximate algorithms, both in theory and in practice, and we demonstrate its efficiency via extensive experiments. They are conceptually simple and easy to implement, and the code is publicly available in github. 


\end{abstract}

\newpage
\setcounter{page}{1}

\section{Introduction}
Recent years have witnessed a tremendous growth using topological summaries, especially the persistence diagrams (encoding the so-called persistent homology) for analyzing complex shapes. 
Indeed, persistent homology is one of the most important development in the field of topological data analysis in the past two decades \cite{EdelsLetsZomo2002, EdelsbrunnerHarer2010}. 
Given an object, e.g, a mesh, an image, a point cloud, or a graph, by taking a specific view of how the object evolves (more formally, a filtration of it), persistent homology maps the input, a potentially complex object, to a topological summary, called the persistence diagram, which captures multiscale features of this objects w.r.t. this view. 
Persistent homology thus provides a unifying way of mapping complex objects to a common feature space: the space of persistence diagrams. One can then carry out data analysis tasks of the original objects, e.g, clustering or classifying a collection of graphs, in this feature space. 
Indeed, in the past decade, persistence diagram summaries have been used for a range of applications in various domains, e.g, in material science \cite{Buchet2018,Hiraoka7035,LB17}, neuroanatomy \cite{HessNeuron,LWAPW17}, graphics \cite{CCGMO09,SOC10}, medicine /biology \cite{GambleHo2010,lfm-pro2007}, etc. 

A key component involved in such a persistent-homology based data analysis framework is to put a suitable metric on the space of persistence diagrams, and compute such distances efficiently. 
One classic distance measure developed for persistence diagrams is the $p$-th Wasserstein distance, and in practice, a popular choice for $p$ is $p = 1$, i.e, the 1-Wasserstein distance. 
This paper focuses on developing efficient, practical and light-weight algorithms to approximate the 1-Wasserstein distance for persistence diagrams. 

In particular, a persistence diagram consists of a multiset of points in the plane, where each point $(b,d)$ corresponds to the creation and death of some topological feature w.r.t. some specific filtration (view) of the input object. Given two persistence diagrams $\perD$ and $\anotherperD$, the 1-Wasserstein distance between them, denoted by $\dW(\perD, \anotherperD)$, is similar to the standard 1-Wasserstein distance (also known as the earth-mover distance) between these two multisets of planar points, but with an important distinction where points are also allowed to be matched to points in the diagonal $\aL$ (the line defined by equation $y=x$) in the plane. Intuitively, the topological features associated to points matched to the diagonal are considered as noise. 

The 1-Wasserstein distance for persistence diagrams can be computed using the Hungarian algorithm \cite{Kuhn} in $O(n^3)$ time where $n$ is the total number of points in the two persistence diagrams. This algorithm is implemented in the widely used Dionysus package \cite{Dionysus}. 
In \cite{Hera}, Kerber et al. develops a more efficient algorithm to approximate the 1-Wasserstein distance between finite persistence diagrams within constant factors. Their algorithm is based on the auction algorithm of Bertsekas \cite{BertsekasAuction}, but with a geometric twist: given that points in the persistence diagrams are all in the plane, they use the weighted kd-tree to provide more efficient search inside the auction algorithm.  In \cite{BertsekasAuction}, the time complexity of the auction algorithm is stated to be $O(A \cdot n^{1/2} \log {(n C)})$ where, in the case of persistence diagrams, $A$ is the number of possible pairings of points between persistence diagrams ($A = \Theta(n^2)$ in the worst case) and $C$ is $\max\{||x - y||_q\}$ over all possible pairings between persistence diagrams. 
While Kerber et al. did not provide an asymptotic time complexity for their approximation algorithm, they provided an empirical estimation of $O(n^{1.6})$ (not true asymptotic time complexity) by using linear regression on the observed running time versus the size of problems. They further show via extensive experiments that their approximation algorithm has a speed-up factor of 50 for small instances to a speed-up factor of 400 for larger instances in comparison to the Hungarian algorithm based implementation.


\subparagraph*{Related work in optimal transport for Euclidean point sets.}
As we will formally introduce in Section \ref{section:preliminaries}), 1-Wasserstein distance for persistence diagrams can be viewed as the standard 1-Wasserstein  distance for discrete planar point sets with special inclusion of points in the diagonal. In what follows, to avoid confusion, we refer to the standard 1-Wasserstein distance between point sets as the \emph{optimal transport (OT)} distance. 
Starting from \cite{Bartal96} and \cite{Bartal98}, there has been a long line of work to approximate the OT-distance for Euclidean point sets using randomly shifted quadtrees (e.g, \cite{Charikar2002,KleinbergTardos2002,IndykThaper2003,LYFC19,SatoYamada2020,Flowtree}). In particular, we consider two such approaches, the \emph{$L_1$-embedding approach} by \cite{IndykThaper2003}, and the \emph{flowtree approach} by \cite{Flowtree}. 
The former maps an input point set $P$ to a certain count-vector $V^P$ with the help of a randomly shifted quadtree, and uses the $L_1$ distance $\|V^P - V^Q\|_1$ between two such count-vectors to approximate the OT-distance between $P$ and $Q$. The latter also uses a randomly shifted quadtree and embeds input points to quadtree cells. It then shows that a certain distance computed from an optimal OT-flow induced by the tree metric (which can be computed by a greedy algorithm in linear time) can approximate the OT-distance between the original point sets. 
Let $\Delta$ denote the spread of the union of two input point sets. Both approaches give an $O(\log \Delta)$-approximation of the OT-distance between original point sets, in time $O(n\log \Delta)$. 

Recently, the idea of using metric trees to approximate OT-distance has also been extended to a more general unbalanced optimal transport problem (where $|P| \neq |Q|$) in \cite{SatoYamada2020}. In \cite{SatoYamada2020}, Sato et al. develops an $O(n\log ^2n)$ time algorithm to approximate unbalanced optimal transport on tree metrics using dynamic programming.  

\subparagraph*{New work.~} 
In practice, for applications such as nearest neighbor search, clustering and classification on large data sets, huge numbers of distance computations will be needed. The time complexity of the aforementioned algorithms for persistence diagrams using the Hungarian algorithm or the geometric variant of the Auction algorithm still causes a significant computational burden.  
In this paper, we aim to develop \emph{near-linear time} approximation algorithms for the 1-Wasserstein distance between persistence diagrams. 
Specifically: 
\begin{itemize}
\item In Section \ref{section:approx_alg}, we show how to modify the algorithms of \cite{IndykThaper2003} and \cite{Flowtree} to approximate the 1-Wasserstein distances between persistence diagrams within the same approximation factor (Theorems \ref{thm:mod_l1} and \ref{thm:mod_ft}). 
Note that in the literature (e.g, \cite{Hera}), it is known that $\dW(\perD, \anotherperD)$ between two persistence diagrams can be computed by (i) first augmenting $\perD$ and $\anotherperD$ to be $\widehat{\perD} = \perD \cup \pi(\anotherperD)$ and $\widehat{\anotherperD} = \anotherperD \cup \pi(\perD)$, respectively, where $\pi(x)$ projects a point $x$ to its nearest neighbor in the diagonal $\aL$; and then (ii) compute the OT-distance between $\widehat{\perD}$ and $\widehat{\anotherperD}$, although it is important to note that the cost of matching two diagonal points needs to be set to be $0$, instead of the standard Euclidean distance. 
However, this requires the modification of the cost for diagonal points; in addition, this also needs to modify a diagram $\perD$ depending on which other diagram $\anotherperD$ it is to be compared with. 
We instead develop a modification where such projection is not needed.  
\item Our modified approaches maintain the simplicity of the original approximation algorithms and are easy to implement. In comparison to approximation for unbalanced optimal transport presented in \cite{SatoYamada2020}, our modified approaches are specific to persistence diagrams and the data structures needed for both of our approaches are much simpler than those of \cite{SatoYamada2020}. Our code is publicly available in github.
In Section \ref{section:exp_res}, we present various experimental results of our new algorithms. We show that both are orders of magnitude faster than previous approaches, although at the price of worsened approximation error. However, note that in practice, the approximate factors are rather small, not as large as the worst case approximation factor. 
We also note that the modified flowtree algorithm achieves a more accurate approximation of the 1-Wasserstein distance for persistence diagrams than the modified $L_1$-embedding approach empirically, although the latter is significantly faster than the former. However, the $L_1$-embedding approach is easier to combine with proximity search data structures e.g, locality sensitive hashing (LSH), given that each input persistence diagram is mapped to a vector and the distance computation is the $L_1$-distance between two such vectors.
\end{itemize}

\section{Preliminaries}
\label{section:preliminaries}
In this section, we first introduce the persistence diagrams and the 1-Wassertein distance between them, which is related to the optimal transport distance (standard 1-Wasserstein distance) for Euclidean point sets. We next describe two existing approximation algorithms for optimal transport distance \cite{Flowtree,IndykThaper2003} based on the use of randomly shifted quadtrees. Our new algorithms (in section \ref{section:approx_alg}) will be based on these two approximation algorithms. 

\subsection{Persistence Diagrams and 1-Wasserstein distance} 
\label{pd_prelim}

We first give a brief introduction of persistent homology and its associated persistence diagram summary. See \cite{EdelsbrunnerHarer2010} for a more detailed treatment of these topics. 
Suppose we are given a topological space $\XX$. A filtration of $\XX$ is a growing sequence of sub-spaces 
$$\mathbb{F}: ~~\emptyset = \XX_0 \subseteq \XX_1 \subseteq \XX_2 \subseteq \cdots \XX_m = \XX$$ 
which can be viewed as a specific way to inspect $\XX$. 
For example, a popular way to generate a filtration of $\XX$ is by taking some meaningful descriptor function $f: \XX \to \reals$ on $\XX$, and take the growing sequence of sub-level sets $\XX_a := f^{-1}(-\infty, a] = \{x \in \XX \mid f(x) \le a \}$ as $a$ increases to be the filtration.
Now given a filtration $\mathbb{F}$, through its course, new topological features (e.g, components, independent loops and voids, which are captured by the so-called homology classes) will sometimes appear and sometimes disappear. The persistent homology encodes the \emph{birth} and \emph{death} of such features in the \emph{persistence diagram} $\dgm \mathbb{F}$. In particular, $\dgm \mathbb{F}$ consists of a \emph{multiset} of points in the plane, that is, a set of points with multiplicities, where each point $(b, d)$ with multiplicity $m$ intuitively means that $m$ independent topological features (homology classes) are created in $\XX_b$ and killed in $\XX_d$. 
Thus, we also refer to $b$ and $d$ as the \emph{birth-time} and \emph{death-time}. The \emph{persistence} of this feature is $|d-b|$ which is the lifetime of this feature. We refer to points in the persistence diagram as \emph{persistent-points}. 

Note that, in general, persistent-points lie above the diagonal $\aL = \{ (x,x) \mid x\in \reals\}$ in the plane. Points closer to the diagonal $\aL$ have lower lifetime (persistence) and thus are less important, with a point $(x,x) \in \aL$ intuitively meaning a feature with persistence $0$. 

To compare two persistence diagrams $\perD$ and $\anotherperD$, intuitively, we wish to find a one-to-one correspondence between their multiset of points (and thus between the features they capture). However, the two sets may be of different cardinality, and we also wish to allow a persistent-point from one diagram to be ``noise" and not present in the other diagram,  which can be captured by allowing this point $p = (p.x,p.y)$ to be matched to its nearest neighbor projection $\pi(p)$ in $\aL$. Let $\pi: \reals^2 \to \aL$ be this projection, where  $\pi(p):= (\frac{p.x+p.y}{2}, \frac{p.x+p.y}{2})$. 
The following $p$-th Wasserstein distance essentially captures this intuition \cite{EdelsbrunnerHarer2010}. 
\begin{definition}[p-Wasserstein distance for persistence diagrams] \label{def:dWperD} 
Given a persistence diagram $\perD$, its \emph{augmentation} $\myaug({\perD})$ consists of $\perD$ together with all points in $\aL$ each with infinite multiplicity. 
Given two persistence diagrams $\perD$ and $\anotherperD$, with their augmentations $\myaug({\perD})$ and $\myaug(\anotherperD)$, respectively, the $p$-Wassertein distance between them is 
\begin{align}
    \mathrm{d_{W,p}^{per}}(\perD, \anotherperD) &:= \inf_{\mu: \myaug(\perD) \to \myaug(\anotherperD)} \bigg( \sum_{p\in \myaug(\perD)}  \|p - \mu(p)\|^p_q \bigg)^{1/p}, 
\end{align}
where $\mu: \myaug(\perD) \to \myaug(\anotherperD)$ ranges over all possible bijections among the two sets. 
\end{definition} 
Note that $q$ is used to denote the inner $L_p$-norm. If $p = \infty$, the $\infty$-Wasserstein distance is the classic \emph{bottleneck distance} between persistence diagrams \cite{PersistenceStability2007}\cite{Hera}. In this paper, we are interested in the case when $p = 1$. 
It turns out that an equivalent definition (which we will use in this paper) is as follows: 
\begin{definition}[1-Wasserstein distance for persistence diagrams, version 2] \label{def:dWperDV2}
Given two point sets $A$ and $B$ in $\reals^2$, an \emph{augmented (perfect) matching} for them is a subset $\Gamma \subset \big(A \cup \pi(B)\big) \times \big(B \cup \pi(A)\big)$ such that 
(i) each $a\in A$ or $b\in B$ appears in exactly one pair in $\Gamma$, and (ii) each $(a,b) \in \Gamma$ is of the following three forms: (1) $a\in A, b\in B$, (2) $a\in A, b = \pi(a) \in \pi(A)$, or (3) $a = \pi(b) \in \pi(B), b\in B$. 

Given two persistence diagrams $\perD$ and $\anotherperD$, the 1-Wasserstein distance between them is: 
\begin{align} 
\dW(\perD, \anotherperD) &:= \min_{\Gamma} \sum_{(p,q)\in \Gamma} \|p - q\|_p, 
\end{align}
where $\Gamma$ ranges over all possible augmented matchings for $\perD$ and $\anotherperD$. 
\end{definition}

\subsection{Relation to optimal transport} 
Readers may have already noticed the similarity between Definition \ref{def:dWperD} with the standard $p$-th Wasserstein distance between two  probability measures. 
To avoid confusion, 
from now on we refer to 1-Wassertein distance as \emph{optimal transport} so as to differentiate from the use of 1-Wasserstein distance of persistence diagrams. 
\begin{definition}[Optimal transport] \label{def:OT}
Given a finite metric space $(X, d_X)$ and two measures $\mu, \nu\in X \to \reals$, the \emph{optimal transport} between them is 
\begin{align}
    \dOT(\mu, \nu) &:= \min_{\tau: X \times X \to \reals} \sum_{x, y \in X} \tau(x,y) \cdot d_X(x,y), 
\end{align}
where $\tau$, called a \emph{transport plan} or \emph{a flow}, is a measure on $X\times X$ whose marginals equal to $\mu$ and $\nu$, respectively; that is, $\tau(\cdot, Y) = \mu(\cdot)$ and $\tau(X, \cdot) = \nu(\cdot)$. 
\end{definition}

Given a multiset of points $A$ in the plane, note that we can view this as a discrete measure 
supported on points in $A$, such that for each subset $S$ of $A$, $\mu_A(S) = \sum_{a \in S} c_a\delta_a$ where $c_a$ is the multiplicity of $a$ in $A$, 
while $\delta_a$ is the Dirac measure supported at $a$. 
Hence in what follows, we sometimes abuse the notations and equate a multiset of points with the discrete measure induced by it, and talk about optimal transport between two multisets of points. 

As shown in \cite{Hera}, one can consider $\widehat{\perD}:=\perD \cup \pi(\anotherperD)$ and $\widehat{\anotherperD}:=\anotherperD \cup \pi(\perD)$ and modify the Euclidean distance so that $d(x, y) = 0$ for $x, y \in \pi(P) \cup \pi(Q)$ to obtain a modified pseudo-metric space $(\reals^2, d)$. In this case, $\dW(\perD, \anotherperD)$ becomes the optimal transport between the discrete measures induced by $\widehat{\perD}$ and $\widehat{\anotherperD}$ under this modified pseudo-metric.

We can also relate $\dW(\perD, \anotherperD)$ to the optimal transport between the discrete measures induced by $\widehat{\perD}$ and $\widehat{\anotherperD}$ 
with the following observation (simple proof is in Appendix \ref{appendix:proofs}): 
\begin{observation}
\label{ot_w1}
Let $\mu_{\widehat{\perD}}$ be the discrete measure induced by $\widehat \perD$ and $\nu_{\widehat{\anotherperD}}$ be the discrete measure induced by $\widehat \anotherperD$. Then
$\dOT(\mu_{\widehat{\perD}}, \nu_{\widehat{\anotherperD}}) \leq 2 \cdot \dW(\perD, \anotherperD).$
\end{observation}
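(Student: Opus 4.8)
The plan is to prove the bound by exhibiting an explicit transport plan between the measures $\mu_{\widehat{\perD}}$ and $\nu_{\widehat{\anotherperD}}$ whose total cost is at most $2\,\dW(\perD,\anotherperD)$; since $\dOT$ is the minimum of the cost over all transport plans, this suffices. First I would observe that $\widehat{\perD}=\perD\cup\pi(\anotherperD)$ and $\widehat{\anotherperD}=\anotherperD\cup\pi(\perD)$ are multisets of the same size $|\perD|+|\anotherperD|$, so $\mu_{\widehat{\perD}}$ and $\nu_{\widehat{\anotherperD}}$ have equal total mass and $\dOT$ between them is well defined. Then I would fix an optimal augmented matching $\Gamma$ realizing $\dW(\perD,\anotherperD)$ (it exists since the diagrams are finite) and read off a transport plan from it, pair by pair.

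Concretely, I would define the plan $\tau$ as a sum of ``unit transports'', one or two per pair of $\Gamma$. For a pair of type (2), $(a,\pi(a))$ with $a\in\perD$, move one unit of mass from $a\in\widehat{\perD}$ to $\pi(a)\in\pi(\perD)\subseteq\widehat{\anotherperD}$, at cost $\|a-\pi(a)\|$. For a pair of type (3), $(\pi(b),b)$ with $b\in\anotherperD$, move one unit from $\pi(b)\in\pi(\anotherperD)\subseteq\widehat{\perD}$ to $b\in\widehat{\anotherperD}$, at cost $\|\pi(b)-b\|$. For a pair of type (1), $(a,b)$ with $a\in\perD$ and $b\in\anotherperD$, move one unit from $a$ to $b$ \emph{and} one unit from $\pi(b)\in\widehat{\perD}$ to $\pi(a)\in\widehat{\anotherperD}$, at total cost $\|a-b\|+\|\pi(b)-\pi(a)\|$ (here $\|\cdot\|$ denotes the underlying norm on $\reals^2$). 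To see this is a valid transport plan, note each point of $\perD$ (resp.\ $\anotherperD$) lies in exactly one pair of $\Gamma$, so exactly one unit leaves it (resp.\ arrives at it); and each copy of a projected point $\pi(b)\in\pi(\anotherperD)\subseteq\widehat{\perD}$ is routed by the unique pair of $\Gamma$ containing $b$ --- to $b$ if that pair has type (3), to $\pi(a)$ if it has type (1) --- and symmetrically each $\pi(a)\in\pi(\perD)\subseteq\widehat{\anotherperD}$ receives exactly one unit. Since the accounting is done per pair of $\Gamma$, multiplicities and coincidences among projected points are automatically respected, so $\tau$ has marginals $\mu_{\widehat{\perD}}$ and $\nu_{\widehat{\anotherperD}}$.

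It then remains to bound the cost. Every type-(2) and type-(3) pair contributes exactly the term it contributes to $\dW(\perD,\anotherperD)$, and a type-(1) pair $(a,b)$ contributes $\|a-b\|+\|\pi(b)-\pi(a)\|$, so I only need the inequality $\|\pi(a)-\pi(b)\|\le\|a-b\|$; then each type-(1) pair costs at most $2\|a-b\|$, and summing gives $\mathrm{cost}(\tau)\le 2\,\dW(\perD,\anotherperD)$. For the inequality, plug in $\pi(x)=\left(\frac{x.x+x.y}{2},\frac{x.x+x.y}{2}\right)$: with $u=a.x-b.x$ and $v=a.y-b.y$ this becomes $\|\pi(a)-\pi(b)\|=|u+v|\cdot\frac{\|(1,1)\|}{2}$, which is $\le\|(u,v)\|=\|a-b\|$ by the triangle inequality $|u+v|\le|u|+|v|$ together with the power-mean bound $|u|+|v|\le 2^{1-1/q}(|u|^q+|v|^q)^{1/q}$ and $\|(1,1)\|_q=2^{1/q}$, valid for every $L_q$-norm with $q\in[1,\infty]$. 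The only part that takes a little care is checking the marginals once multiplicities are present, which is exactly why I would organize $\tau$ as a sum of per-pair unit transports; the rest is the elementary comparison between $\pi(a)-\pi(b)$ and $a-b$.
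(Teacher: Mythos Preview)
Your proof is correct and follows essentially the same approach as the paper: construct a transport plan from an optimal augmented matching $\Gamma$ by sending type-(1) pairs $(a,b)$ to both $a\mapsto b$ and $\pi(b)\mapsto\pi(a)$, then bound the extra cost using $\|\pi(a)-\pi(b)\|\le\|a-b\|$. Your version is more detailed (explicit marginal verification and a careful proof of the projection inequality for all $L_q$), whereas the paper's proof is a terse two-liner that states the same inequality without justification.
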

Given two discrete measures $\mu, \nu \in X \times \reals$ on a finite metric space $(X, d_X)$, computing the optimal transport distance can be reduced to finding the optimal min-cost flow on a complete bipartite graph using combinatorial flow algorithms as described in \cite{Kuhn}. 
In our setting later, $\mu$ and $\nu$ will both be induced by point sets in $\reals^2$, and $d_X$ is the standard Euclidean distance.

\subsection{Quadtree-based approximation algorithms for optimal transport}
\label{subsection:qt_prelim}

In this section, we briefly review two algorithms to approximate the optimal transport for two discrete measures $\mu$ and $\nu$. 
Both of these algorithms use a randomly shifted quadtree, which we introduce first. 

\subparagraph*{Randomly-shifted quadtree.} 
Let $X \subseteq \R^d$ be a finite set of points (for our setting, $d=2$ for persistence diagrams). To simplify the description, we will assume that the minimum pairwise distance between any two points in $X$ is 1 and that $X$ is contained in $[0, \Delta]^d$ (where $\Delta$,  the ratio of the diameter of $X$ over the minimum pairwise distance, is also called the \emph{spread} of $X$). 
First, let $H_0 = [-\Delta, \Delta]^d$ be the hypercube with side length $2\Delta$ which is centered at the origin. 
Now shift $H_0$ by a random vector whose coordinates are from $[0, \Delta]$ to obtain $H$. Note that $H$ still encloses $X$ as $H$ has side length $2\Delta$. 

Construct a tree of hypercubes by letting $H$ be associated to the root and halving $H$ along each dimension. Recurse on the resulting sub-hypercubes that contain at least one point from $X$, and stop when a hypercube contains exactly one point from $X$. 
Each leaf node of resulting quadtree $\aT_X$ contains exactly one point in $X$, and there are exactly $|X|$ leaves.
The resulting quadtree $\aT_X$ has at most $O(\log (d\Delta))$ levels. To see that $\aT_X$ has at most $O(\log(d\Delta))$ levels, consider the depth $i$ of some internal node. We know that the hypercube associated with the node has a side length of $\frac{\Delta}{2^i}$ and the distance between any two points in the hypercube, $c$, is less than or equal to $\frac{\Delta \sqrt{d}}{2^i}$. Then $i \leq \log (d \Delta)$ so there are at most $O(\log (d\Delta))$ levels in $\aT_X$. Additionally, the size of $\aT_X$ is $O(|X| \log (d\Delta))$.
It can be constructed in $\tilde{O}(|X| \log (d\Delta))$ time where $\tilde{O}$ includes term polynomial in $\log |X|$. 
We set the root level as level $\log \Delta + 1$ and subsequent levels are labeled as $\log \Delta, \log \Delta - 1 , \dots$. The weight of each tree edge between level $\ell + 1$ and level $\ell$ is $2^\ell$. Note that the quadtree cell has side length $2^{\ell}$ at level $\ell$.

\subparagraph*{Approximation algorithm 1: $L_1$-embedding via $\aT_X$.} 
Given two discrete measures $\mu$ and $\nu$, let $X$ be the union of their support \footnote{Note that in general, $X$ can be a superset of the support of $\mu$ and $\nu$. Indeed, if there are a set of $m$ measures and we perform kNN queries for a query measure, it is more convenient to set $X$ as the union of support of all these measures and build only a single quadtree $\aT_X$.}. Construct the randomly shifted quadtree $\aT_X$ as described above; $X$ is sometimes omitted from the subscript when its choice is clear from the context. 
Given a tree node $v\in \aT_X$, its level is denoted by $\ell(v)$. We will abuse the notation slightly and use $v$ also to denote the quadtree cell (which is a hypercube of size length $2^{\ell(v)}$). Given a discrete $\mu$, then $\mu(v)$ denotes the total measure of points from $\mu$ contained within this quadtree cell, namely, the total size of points with multiplicity counted from $\mu$ within this quadtree cell.
We can now map $\mu$ to a vector $\aV^\mu$ where each index corresponds to a tree node $v \in \aT_X$, and $\aV^\mu[v]$ has coordinates $2^{\ell(v)} \mu(v)$. Similarly, map $\nu$ to vector $\aV^\nu$. 
Then Indyk and Thaper \cite{IndykThaper2003} showed that $\| \aV^\mu - \aV^\nu\|_1 = \sum_{v\in \aT} 2^{\ell(v)} |\mu(u) - \nu(v)|$ gives an approximation to the optimal transport $\dOT(\mu, \nu)$ in expectation. 

\begin{theorem}[\cite{IndykThaper2003}] \label{thm:L1embedding} 
Given two discrete measures $\mu, \nu$ such that $\mysupp(\mu) \cup \mysupp(\nu) \subseteq \R^2$ and $s = |\mysupp(\mu) \cup \mysupp(\nu)|$, using a randomly shifted quadtree, $\|\aV^\mu- \aV^\nu\|_1$ can be calculated in time $O(s\log \Delta)$ and there are constants $C_1, C_2$ such that $C_1 \cdot \dOT(\mu, \nu) \leq E[\|\aV^\mu - \aV^\nu\|_1] \leq C_2 \cdot \log \Delta \dOT(\mu, \nu)$. Here, $E[ \cdot]$ stands for the expectation.
\end{theorem}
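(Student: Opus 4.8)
The plan is to separate the claim into an easy running-time part and two (in)equalities, and to get both inequalities by reinterpreting $\|\aV^\mu - \aV^\nu\|_1$ as an optimal transport cost on the tree metric carried by the quadtree. For the running time I would simply invoke the facts from Section~\ref{subsection:qt_prelim}: $\aT_X$ has $O(\log\Delta)$ levels and total size $O(s\log\Delta)$, and is built in $\tilde O(s\log\Delta)$ time; pushing the masses of $\mu$ and $\nu$ down the tree to form $\aV^\mu,\aV^\nu$ and then reading off the $L_1$ difference is linear in the tree size, hence $O(s\log\Delta)$. For the substantive part, equip the node set of $\aT_X$ with the shortest-path metric $d_T$ determined by the stated edge weights (weight $2^\ell$ on an edge between levels $\ell+1$ and $\ell$) and view $\mu,\nu$ as measures on the leaves. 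The first step is the standard closed form for transport on a tree: deleting the edge above a node $v$ splits off exactly the leaves inside cell $v$, carrying masses $\mu(v)$ and $\nu(v)$, so the optimal transport cost on $(\,\cdot\,,d_T)$ equals $\sum_v 2^{\ell(v)}\,|\mu(v)-\nu(v)| = \|\aV^\mu-\aV^\nu\|_1$ (this uses that $\mu,\nu$ have equal total mass, as they do in our setting, where it is needed for $\dOT$ to be defined at all). I will write $\dOTtree(\mu,\nu)$ for this common value.

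For the lower bound I would argue deterministically, for every fixed shift. If two distinct points $x,y\in X$ have their least common ancestor cell at level $m$, then both lie in a cube of side $2^m$, so $\|x-y\| \le \sqrt2\,2^m$, while the tree path from $x$ up to the LCA and back down to $y$ has length $d_T(x,y) = (2^m-2^{\ell_x}) + (2^m-2^{\ell_y}) \ge 2^m$, since both leaf levels $\ell_x,\ell_y$ are at most $m-1$. Hence $d_T(x,y)\ge \|x-y\|/\sqrt2$ pointwise, so any feasible transport plan costs at least $\tfrac1{\sqrt2}$ times its Euclidean cost, giving $\dOTtree(\mu,\nu)\ge \tfrac1{\sqrt2}\dOT(\mu,\nu)$; this holds for every shift and thus in expectation, yielding $C_1=1/\sqrt2$.

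For the upper bound, fix an optimal Euclidean transport plan $\tau^*$ between $\mu$ and $\nu$. Since $\tau^*$ is also feasible for the tree metric, $\|\aV^\mu-\aV^\nu\|_1 = \dOTtree(\mu,\nu) \le \sum_{x,y}\tau^*(x,y)\,d_T(x,y)$, so by linearity of expectation it suffices to show $E[d_T(x,y)] = O(\log\Delta)\,\|x-y\|$ for each pair (constant depending only on $d=2$). Here I would use $d_T(x,y) \le 2^{m+1}$ together with the fact that $x,y$ fall into different level-$\ell$ cells exactly for the levels $\ell<m$, so $2^{m+1} \le 4\cdot 2^{\max\{\ell:\,x,y\text{ separated at level }\ell\}}$ and therefore $E[d_T(x,y)] \le 4\sum_\ell 2^\ell\Pr[x,y\text{ separated at level }\ell]$. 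The random shift makes each coordinate grid (spacing $2^\ell$) uniformly placed, so $x,y$ are separated in coordinate $j$ with probability at most $|x_j-y_j|/2^\ell$; a union bound over the two coordinates gives $\Pr[\text{separated at level }\ell] \le \min\{1,\ \|x-y\|_1/2^\ell\}$. Splitting the sum at the level where $2^\ell\approx\|x-y\|_1$: the fine levels (those with $2^\ell < \|x-y\|_1$) contribute a downward-geometric series summing to $O(\|x-y\|)$, while each of the $O(\log\Delta)$ coarse levels (those with $2^\ell \ge \|x-y\|_1$, where the bound $\|x-y\|_1/2^\ell$ is the binding one) contributes at most $\|x-y\|_1$, for a total of $O(\log\Delta)\|x-y\|$. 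Plugging back, $E[\|\aV^\mu-\aV^\nu\|_1] \le O(\log\Delta)\sum_{x,y}\tau^*(x,y)\|x-y\| = C_2\log\Delta\cdot\dOT(\mu,\nu)$.

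The only genuinely non-routine step is this last one: bounding $E[d_T(x,y)]$, i.e.\ controlling, for every pair and simultaneously across all $O(\log\Delta)$ levels, the chance that the random grid separates $x$ from $y$. The $\log\Delta$ factor is produced precisely by the coarse levels, where the separation probability decays like $\|x-y\|/2^\ell$ and one adds up $O(\log\Delta)$ contributions each of size $\|x-y\|$; the point to be careful about is the fine levels, where the probability bound degrades to the trivial $1$ and one must instead sum the edge lengths $2^\ell$ directly — that partial sum is only $O(\|x-y\|)$ because those are exactly the levels at which $2^\ell$ does not exceed a constant times $\|x-y\|$, so the geometric series is controlled by its largest term.
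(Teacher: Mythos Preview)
The paper does not prove Theorem~\ref{thm:L1embedding}; it is quoted as a known result from \cite{IndykThaper2003} and used as a black box (the paper only remarks, immediately after the statement, that $\|\aV^\mu-\aV^\nu\|_1=\dOTtree(\mu,\nu)$ and that a greedy flow realizes this). So there is no in-paper proof to compare against.

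That said, your reconstruction is the standard Indyk--Thaper argument and is correct. Your identification $\|\aV^\mu-\aV^\nu\|_1=\dOTtree(\mu,\nu)$ matches the paper's remark, your deterministic lower bound $d_T(x,y)\ge\|x-y\|/\sqrt2$ via the LCA-level estimate is right (and indeed gives the stronger per-instance, not just in-expectation, inequality), and your upper bound via the level-by-level separation probability is the textbook computation. The one step worth stating a touch more carefully is the passage from $d_T(x,y)\le 2^{m+1}$ to $E[d_T(x,y)]\le 4\sum_\ell 2^\ell\Pr[\text{sep at level }\ell]$: it works because being separated at level $m-1$ forces separation at every finer level, so $2^{m-1}\le\sum_{\ell:\text{sep}}2^\ell$ and hence $d_T\le 4\sum_\ell 2^\ell\mathbb{1}[\text{sep at }\ell]$ pointwise; you have this, but it is the only place a reader might pause.
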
 

We note that it also turns out that $\| \aV^\mu - \aV^\nu\|_1$ gives exactly the optimal transport between $\mu$ and $\nu$ along the tree metric induced by $\aT_X$. Specifically, for each $v\in \aT_X$, set its weight to be $w(v) = 2^{\ell(v)}$.
Then for any $x, x' \in X$, define $d_T(x, x')$ to be the total weight of the unique tree path connecting the quadtree leaf $v_x$ (containing $x$) and leaf $v_{x'}$ (containing $x'$).
Then the optimal transport between $\mu$ and $\nu$ w.r.t. metric $d_T$, denoted by $\dOTtree(\mu, \nu)$,
satisfies that $\dOTtree(\mu, \nu) = \| \aV^\mu - \aV^\nu\|_1$. 

Furthermore, we can consider that this optimal transport $\dOTtree$ is generated by a following \emph{greedy-flow} $\Gflow: X \times X \to \reals$: 
Starting from leaf-nodes, we will match up as many unmatched points $\mu\cap v$ to $\nu \cap v$ as we can within each node $v$, and pass the remaining unmatched portion to its parent. In general, each tree node $v\in \aT$ will have a $\mu$-demand $\widehat{\mu}(v)$ and $\widehat{\nu}(v)$, which collect all unmatched measure from its $2^d$ child nodes. $\mu$-demand (resp. $\nu$-demand) at a leaf node  $v$ is initialized to be $\mu(v)$ (resp. $\nu(v)$). We then match these demand as much as we can and pass on $|\widehat{\mu}(v) - \widehat{\nu}(v)|$ to its parent as unmatched $\mu$-measure, or unmatched $\nu$-measure, whichever is left. 
Note that a greedy-flow $f_G$ is not unique, but it tuns out that any such greedy-flow (greedy transport plan) gives rise to the optimal transport distance between $\mu$ and $\nu$ w.r.t. the tree metric $d_T$ (See \cite{Kalantari95} for more detail): i.e, 
\begin{align}\label{eqn:greedyflow} 
    (\| \aV^\mu - \aV^\nu\|_1 = )~ \dOTtree(\mu, \nu) &= \sum_{x, x'\in X} \Gflow (x, x') d_T(x, x'). 
\end{align}

\subparagraph*{Approximation algorithm 2: Flowtree.} 
The flowtree algorithm by \cite{Flowtree} is based on the previous approach. The only modification is that, consider a greedy-flow $\Gflow$ as described above. 
Instead of using the tree metric $d_T$ to compute the optimal transport distance, the \emph{flowtree estimate} computes the cost of this flow using the standard Euclidean distance:  
\begin{align}\label{eqn:flowtreeW}
\dOTflow (\mu, \nu) &= \sum_{x, x' \in X} \Gflow (x, x') \| x - x'\|. 
\end{align}
Comparing Equation (\ref{eqn:greedyflow}) to the above equation, the difference is minor ($d_T(x,x')$ versus $\|x - x'\|$). However, in practice, $\dOTflow$ appears to provide a much more accurate estimate to the optimal transport distance $\dOT(\mu, \nu)$ w.r.t. the Euclidean distance. 
Unfortunately, unlike $\dOTtree$, which can be computed as a $L_1$-distance between two specific vectors, to compute $\dOTflow$, we now have to compute a greedy-flow $\Gflow$ explicitly (which can be done linear in the size of quadtree; however conceptually, this is not as simple as $L_1$-distance). Overall, we have the following result: 
\begin{theorem}[\cite{Flowtree}]
Given two discrete measures $\mu, \nu$ such that $\mysupp(\mu) \cup \mysupp(\nu) \subseteq \R^2$ and $s = |\mysupp(\mu) \cup \mysupp(\nu)|$, using a randomly shifted quadtree, $\dOTflow(\mu, \nu)$ can be computed in time $O(s \log \Delta)$ and there are constants $C_1, C_2$ such that $C_1 \cdot \dOT(\mu, \nu) \leq E[\dOTflow(\mu, \nu)] \leq C_2 \cdot \log \Delta \cdot \dOT(\mu, \nu)$. Here, $E[\cdot ]$ stands for the expectation. 
\end{theorem}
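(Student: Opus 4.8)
\medskip
\noindent\textbf{Proof proposal.}
The plan is to read off both the running-time bound and the two-sided estimate from Theorem~\ref{thm:L1embedding} (the Indyk--Thaper guarantee) together with a comparison between $\dOTflow$, $\dOTtree$ and $\dOT$ that holds for \emph{every} realization of the random shift, so that no new probabilistic argument is needed beyond the one already used for the $L_1$-embedding.

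For the lower bound I would argue that, for any fixed shift, the greedy-flow $\Gflow$ is a \emph{feasible} transport plan between $\mu$ and $\nu$: it is nonnegative, and since $\mu$ and $\nu$ have equal total mass, every unit of $\mu$-mass is eventually matched to a unit of $\nu$-mass (at the latest inside the root cell, where the residual $\mu$- and $\nu$-demands coincide), so $\Gflow$ has marginals exactly $\mu$ and $\nu$. Since $\dOTflow(\mu,\nu)=\sum_{x,x'}\Gflow(x,x')\|x-x'\|$ is exactly the Euclidean cost of this one plan and $\dOT(\mu,\nu)$ is the minimum Euclidean cost over \emph{all} feasible plans, we get $\dOTflow(\mu,\nu)\ge \dOT(\mu,\nu)$ deterministically, hence the lower bound with $C_1=1$ after taking expectations.

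For the upper bound I would compare $\dOTflow$ and $\dOTtree$ term by term, using that the quadtree metric $d_T$ dominates a constant multiple of the Euclidean metric on $X$. Concretely, for distinct $x,x'\in X$, let $\ell$ be the level of their lowest common ancestor in $\aT_X$; both points lie in that ancestor's cell, which is a square of side $2^\ell$, so $\|x-x'\|\le\sqrt 2\cdot 2^\ell$, while the tree path between the corresponding leaves contains the two edges leaving that ancestor downwards, each of weight $2^{\ell-1}$, so $d_T(x,x')\ge 2^\ell$; thus $\|x-x'\|\le\sqrt 2\, d_T(x,x')$ for all $x,x'$. Summing against $\Gflow$ and using Equation~(\ref{eqn:greedyflow}) gives, for every realization,
\[
\dOTflow(\mu,\nu)=\sum_{x,x'\in X}\Gflow(x,x')\|x-x'\|\le \sqrt 2\sum_{x,x'\in X}\Gflow(x,x')\,d_T(x,x')=\sqrt 2\,\dOTtree(\mu,\nu)=\sqrt 2\,\|\aV^\mu-\aV^\nu\|_1 .
\]
Taking expectations and applying the upper bound of Theorem~\ref{thm:L1embedding} yields $E[\dOTflow(\mu,\nu)]\le\sqrt 2\,E[\|\aV^\mu-\aV^\nu\|_1]\le\sqrt 2\,C_2\log\Delta\cdot\dOT(\mu,\nu)$, which is the claim with a constant $\sqrt 2$ times that of Indyk--Thaper.

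For the running time: $\aT_X$ has $O(s\log\Delta)$ nodes and is built in $O(s\log\Delta)$ time (Section~\ref{subsection:qt_prelim}; here $d=2$). The greedy-flow and the value $\dOTflow$ are then produced in one bottom-up sweep: at each node splice together the residual-demand lists of its children (each node carries residual mass on only one side, stored as a linked list of (point, residual)-pairs); then repeatedly pair a $\mu$-entry with a $\nu$-entry, route the smaller of the two residuals, delete whichever entry is exhausted, and add that amount times the Euclidean distance between the two points to a running total. Each pairing deletes at least one list entry and entries are never recreated, so there are only $O(s)$ pairings; the splicing costs $O(1)$ per node--child incidence, hence $O(s\log\Delta)$ overall, and the accumulated total equals $\dOTflow(\mu,\nu)$. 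The step I expect to be the main obstacle is exactly this running-time analysis: showing the greedy-flow has only $O(s)$ positive entries (without which a bottom-up merge of residual lists could appear to cost $\Theta(s\log\Delta)$ per level) and keeping the residual mass at internal nodes consistently attributed to \emph{original} input points, since $\dOTflow$ is a sum over pairs of original points rather than over quadtree cells; the approximation inequalities themselves are immediate once the metric comparison and Theorem~\ref{thm:L1embedding} are in hand.
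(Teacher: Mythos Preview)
The paper does not prove this theorem: it is quoted as a known result from \cite{Flowtree} and stated without proof in Section~\ref{subsection:qt_prelim}, so there is no ``paper's own proof'' to compare against.

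That said, your argument is correct and is essentially the standard one. The lower bound is immediate since $\Gflow$ is a feasible plan for the Euclidean problem; your metric comparison $\|x-x'\|\le\sqrt 2\,d_T(x,x')$ via the lowest common ancestor is right (with the edge-weight convention used in the paper, the two edges below the LCA at level $\ell$ contribute $2\cdot 2^{\ell-1}=2^\ell$), and chaining it with Equation~(\ref{eqn:greedyflow}) and Theorem~\ref{thm:L1embedding} gives the upper bound exactly as you wrote. Your running-time sketch is also fine: the key observation that each greedy pairing permanently deletes at least one residual entry, so there are $O(s)$ pairings total, is the right way to get $O(s\log\Delta)$ overall; the concern you flag about attributing residual mass back to original points is handled precisely by the linked-list representation you describe.
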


\section{Approximating 1-Wasserstein distances for persistence diagrams} 
\label{section:approx_alg}

We now present two algorithms to approximate the 1-Wasserstein distance for persistence diagrams, based on the approximation schemes of optimal transport in Section \ref{subsection:qt_prelim}. Note that the results here are developed for the $L_2$ norm and through the equivalence of norms, can be generalized to any $L_p$ norm and only changes the constant factor in the distortion induced by each approximation. 

\subsection{Approximation algorithms via $L_1$ embedding}
\label{subsec:PerL1embedding} 

\subsubsection{Description of the new quadtree-based $L_1$-embedding}
\label{subsubsec:modl1} 
Let $\perD$ and $\anotherperD$ be two persistence diagrams and let $X = \perD \uplus \anotherperD$, the disjoint union of $\perD$ and $\anotherperD$. In what follows, for simplicity of presentation, we assume that the minimum distance between any two distinct points in $X$, as well as between any point in $X$ with a point in the diagonal $\aL$, is $1$. The latter constraint can be removed with some extra care on handling leaf nodes in the quadtree. Assume w.l.o.g that $\Delta$ is a power of $2$. 

Partition the (randomly shifted) hypercube $H$ described in section \ref{subsection:qt_prelim} into grids where the cells have side length $\Delta, \Delta/2, \ldots, 2^i, \ldots, 2, 1, \frac{1}{2}$. Note that each cell at the lowest level can contain at most one point, and if a leaf contains a point then it cannot intersect the diagonal $\aL$. 
Let $\aT_X$ be the resulting quadtree, where leaves are all cells that contain exactly one point from $X$. 
We further use $G_i$ to denote the set of quadtree cells with side-length $2^i$ (i.e, those in level-$i$); we refer to $G_i$ as the level-$i$ grid. Note that the size of the quadtree is $O(|X| \log \Delta)$. Additionally, we call a cell a \emph{terminal cell} if it intersects the diagonal $\aL$; otherwise, it is \emph{non-terminal}. 

Now for each grid $G_i$, construct a vector $\aV_i^\perD$ with one coordinate per cell, where each coordinate counts the number of points in the corresponding cell. The vector representation $\aV^\perD$ for $\perD$ is then the concatenation of all these vectors $2^i\aV_i^\perD$ where $2^i$ is the cell side length for grid $G_i$:  
    \begin{equation*}
     \aV^\perD = [ \frac{1}{2}\aV_{-1}^\perD, \aV_0^\perD, 2\aV_1^\perD, \dots, 2^i\aV_i^\perD, \dots, ] 
    \end{equation*} 
Construct the vector $\aV^\anotherperD$ similarly.  
We use $p_k$ to denote the value of coordinate $k$ in $\aV_i^\perD$ and $q_k$ to denote the value of coordinate $k$ in $\aV_i^\anotherperD$. 
    Now, we will describe a \emph{modified-$L_1$ distance} $|\aV^\perD - \aV^\anotherperD |_T$ for these vectors, 
    which is similar to the $L_1$ norm. 
    To compute $|\aV_i^\perD - \aV_i^\anotherperD|_T$, we will define $|p_k - q_k|_T$. There are two cases for the $|p_k - q_k|_T$ to consider: \\
    Case 1: if coordinate $k$ is not associated with a terminal cell, then use $|p_k - q_k|$ for $|p_k - q_k|_T$.\\
    Case 2: if coordinate $k$ is associated with a terminal cell, then set $|p_k - q_k|_T = 0$. \\ 
    Then we have $|\aV_i^\perD - \aV_i^\anotherperD|_T = \sum_{k = 1}^{|G_i|}|p_k - q_k|_T$, and  
    \begin{align}
        \dTper(\perD, \anotherperD) &:=|\aV^\perD - \aV^\anotherperD|_T  = \sum_{i = -1}^{\log_2\Delta} 2^i |\aV_i^\perD - \aV_i^\anotherperD|_T. 
        \label{eqn:dTper} 
    \end{align}

    \subparagraph*{An equivalent $L_1$-distance formulation.} 
    We introduce the above vector representation and the modified $L_1$-distance as it is more convenient for later theoretical analysis. However, algorithmically, we wish to have a true $L_1$-embedding. It turns out that an equivalent formulation is as follows: Let $\widehat{G}_i$ denote the level-$i$ quadtree cells \emph{that do not intersect the diagonal $\aL$}. We then compute a vector representation $\widehat{\aV}^\perD$ (resp. $\widehat{\aV}^\anotherperD$) restricted only to cells in $\bigcup \widehat{G}_i$. In other words, all entries corresponding to cells intersecting the diagonal are ignored in constructing $\widehat{\aV}^\perD$  and $\widehat{\aV}^\anotherperD$. We then have that 
    \begin{align}
        \|\widehat{\aV}^\perD - \widehat{\aV}^\anotherperD \|_1 = |\aV^\perD - \aV^\anotherperD|_T  = \dTper(\perD, \anotherperD). 
    \end{align}
    That is, our quadtree-induced distance $\dTper(\perD, \anotherperD)$ is a $L_1$-distance for suitably constructed vectors. Nevertheless, we use the definition as in Equation (\ref{eqn:dTper}) to simplify proofs later. 
    
It is easy to see that the construction takes the same time as the $L_1$-embedding approach described in Section \ref{subsection:qt_prelim}. We now show that the $L_1$-distance $\dTper(\perD,\anotherperD)$ approximates the 1-Wasserstein distance $\dW(\perD, \anotherperD)$ for the persistence diagrams. The main results for this $L_1$-embedding approach are summarized as follows, and we prove the approximation bound in Section \ref{subsubsec:approxL1}. 
 \begin{theorem}
    \label{thm:mod_l1}
    Given persistence diagrams $\perD$ and $\anotherperD$ such that $s = |\perD|+ |\anotherperD|$, we can compute $\dTper(\perD, \anotherperD) = |\aV^\perD - \aV^\anotherperD|_T$ in time $O(s \log \Delta)$ using the randomly shifted quadtree. 
    Furthermore, the expected value of $\dTper(\perD, \anotherperD)$ is an $O(\log \Delta)$-approximation of the 1-Wasserstein distance $\dW( \perD, \anotherperD)$; i.e, 
    there are constants $c_1$ and $c_2$ such that $c_1 \cdot \dW(\perD, \anotherperD) \leq E[|\aV^\perD - \aV^\anotherperD|_T] \leq c_2 \log \Delta \cdot \dW( \perD, \anotherperD)$.
    \end{theorem}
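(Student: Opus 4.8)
The running-time bound is immediate: the vector $\widehat{\aV}^\perD$ (and likewise $\widehat{\aV}^\anotherperD$) has at most $O(s\log\Delta)$ nonzero entries, since each of the $s$ points contributes to one cell per level and there are $O(\log\Delta)$ levels; building the quadtree and reading off these sparse vectors takes $\tilde{O}(s\log\Delta)$ time, and $\|\widehat{\aV}^\perD-\widehat{\aV}^\anotherperD\|_1$ is then a sparse $L_1$ computation. So the bulk of the work is the two-sided approximation bound for $\dTper$.

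The plan is to reduce everything to the Euclidean optimal-transport result of Indyk--Thaper (Theorem~\ref{thm:L1embedding}) applied to the \emph{augmented} diagrams $\widehat{\perD}=\perD\cup\pi(\anotherperD)$ and $\widehat{\anotherperD}=\anotherperD\cup\pi(\perD)$. The key structural claim is that the modified distance $\dTper(\perD,\anotherperD)$ equals the optimal transport along the tree metric $d_T$ of the quadtree $\aT_{\widehat X}$ between $\mu_{\widehat\perD}$ and $\nu_{\widehat\anotherperD}$, \emph{after} declaring every edge of $\aT$ that lies on the diagonal-containing branch to have weight $0$ (equivalently: contracting all terminal cells). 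Concretely, I would argue: (1) in the greedy-flow formulation of $\dOTtree$ (Equation~\eqref{eqn:greedyflow}), any unmatched mass that reaches a terminal cell can be routed to a diagonal point of the opposite augmented diagram at zero cost, because terminal cells of $\widehat\perD$ and $\widehat\anotherperD$ both contain diagonal points of infinite supply; (2) therefore, in the greedy matching, once mass enters a terminal cell it is immediately absorbed, which is exactly what setting $|p_k-q_k|_T=0$ in Case 2 records, and what zeroing the corresponding vector coordinates records; (3) conversely, zeroing those coordinates cannot \emph{decrease} the true $\dW$, since any augmented matching of $\perD,\anotherperD$ that sends a point to the diagonal costs at least the Euclidean distance to the diagonal, which dominates the zeroed contribution. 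Combining (1)--(3) with Observation~\ref{ot_w1} and Theorem~\ref{thm:L1embedding} should give $c_1\cdot\dW(\perD,\anotherperD)\le E[\dTper(\perD,\anotherperD)]$ for the lower bound, and the upper bound $E[\dTper(\perD,\anotherperD)]\le c_2\log\Delta\cdot\dW(\perD,\anotherperD)$ will follow because $\dTper\le\dOTtree(\mu_{\widehat\perD},\nu_{\widehat\anotherperD})$ pointwise (zeroing coordinates only shrinks the $L_1$ norm) and $\dOTtree$ is $O(\log\Delta)$-approximated in expectation by the Indyk--Thaper bound applied to the augmented pair.

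A technical point I would need to handle carefully is the treatment of the \emph{infinite} supply of diagonal points in the augmentations: I cannot literally put infinitely many points into the quadtree. The fix is to observe that only finitely many diagonal points are ever used, namely the projections $\pi(\perD)\cup\pi(\anotherperD)$, and that any terminal cell (one intersecting $\aL$) effectively has a diagonal point available inside it; so I would formalize ``terminal cell'' as a cell with a diagonal endpoint and treat the cost of absorbing mass there as $0$, matching Definition~\ref{def:dWperDV2} where matching $a$ to $\pi(a)$ and separately $\pi(b)$ to $b$ is allowed. I also need the side-length-$\tfrac12$ bottom level so that a leaf holding a genuine point of $X$ cannot simultaneously touch $\aL$ (guaranteed by the standing assumption that points of $X$ are at distance $\ge 1$ from $\aL$), which keeps Cases 1 and 2 disjoint at the leaf level.

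The main obstacle I anticipate is pinning down the lower bound constant $c_1$: one must show that \emph{discarding} the diagonal cells does not make $\dTper$ collapse relative to $\dW$. The worry is a configuration where $\perD$ has many points close to the diagonal whose cheapest option in $\dW$ is projection to $\aL$, yet whose greedy-flow on the truncated tree routes them to distant non-diagonal cells, inflating $\dTper$; this is actually the \emph{good} direction for a lower bound but the \emph{bad} direction for the upper bound, so the real tension is the reverse: a point that $\dW$ pays a large Euclidean distance to move between diagrams might, on the quadtree, be cheaply absorbed into a nearby terminal cell, making $\dTper$ too small. I would rule this out by showing that a terminal cell at level $i$ has side length $2^i$ comparable to the distance of its contents to the diagonal (up to the random shift), so absorbing mass there costs $\Omega(2^i)$ in the weighting $2^i\aV_i$, which is within a constant of the true diagonal-projection cost; the random shift contributes only an $O(\log\Delta)$ factor exactly as in the original Indyk--Thaper analysis, and is absorbed into $c_2$ rather than $c_1$.
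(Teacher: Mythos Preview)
Your upper bound is essentially the paper's argument: you show $\dTper(\perD,\anotherperD)\le \|\aV^{\widehat\perD}-\aV^{\widehat\anotherperD}\|_1$ pointwise (zeroing terminal coordinates only shrinks the $L_1$ norm), then apply Theorem~\ref{thm:L1embedding} to the augmented diagrams, and finish with Observation~\ref{ot_w1}. That chain is exactly Lemma~\ref{lemma:upper_bound_l1}.

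The lower bound, however, has a real gap. You write that ``combining (1)--(3) with Observation~\ref{ot_w1} and Theorem~\ref{thm:L1embedding}'' yields $c_1\cdot\dW\le E[\dTper]$, but neither ingredient points in the needed direction. Observation~\ref{ot_w1} upper-bounds $\dOT(\mu_{\widehat\perD},\nu_{\widehat\anotherperD})$ by $2\dW$, and Theorem~\ref{thm:L1embedding} lower-bounds $E[\|\aV^{\widehat\perD}-\aV^{\widehat\anotherperD}\|_1]$ by $\dOT$; chaining them gives a lower bound on the \emph{full} $L_1$ norm, not on the \emph{truncated} quantity $\dTper$, which is smaller. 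Your final paragraph tries to recover by arguing that ``absorbing mass [at a terminal cell at level $i$] costs $\Omega(2^i)$ in the weighting $2^i\aV_i$'', but this is the coordinate you just zeroed --- the contribution at that level is exactly $0$. What actually pays for the diagonal projection is the accumulated $\sum_{j<i}2^j$ from the \emph{non}-terminal ancestors below level $i$, and turning that into a bound on $\dW$ requires an argument you have not supplied.

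The paper does not go through Indyk--Thaper for the lower bound at all. Instead it proves, \emph{deterministically for every quadtree}, that $\dW(\perD,\anotherperD)\le 2\sqrt 2\,\dTper(\perD,\anotherperD)$. The mechanism is to exhibit a concrete augmented matching $\greedyM$: process levels bottom-up, match greedily inside each non-terminal cell, and in any terminal cell send every still-unmatched point to its diagonal projection. The number of points that survive unmatched past level $i$ is exactly $|\aV_i^\perD-\aV_i^\anotherperD|_T$, so the number matched \emph{at} level $i$ is $|\aV_{i-1}^\perD-\aV_{i-1}^\anotherperD|_T-|\aV_i^\perD-\aV_i^\anotherperD|_T$, each pair at Euclidean cost $\le 2^i\sqrt 2$. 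Telescoping gives $\mathrm{cost}(\greedyM)\le 2\sqrt 2\,\dTper$, and since $\greedyM$ is a valid augmented matching, $\dW\le\mathrm{cost}(\greedyM)$. This is the missing idea in your proposal: build a feasible witness for $\dW$ directly out of the quadtree statistics, rather than trying to route through the Euclidean $\dOT$ of the augmented pair.
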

    
\subsubsection{Approximation guarantees} 
\label{subsubsec:approxL1}

Our approximation bound in Theorem \ref{thm:mod_l1} follows from Lemmas \ref{lemma:lower_bound_l1} and \ref{lemma:upper_bound_l1} below. To prove these lemmas, we will first introduce a \emph{greedy augmented matching}. 

\subparagraph*{Greedy augmented matching $\greedyM$.} We construct the following augmented matching (recall Definition \ref{def:dWperDV2}) $\greedyM \subseteq (\perD \cup \pi(\anotherperD)) \times (\anotherperD \cup \pi(\perD)\big)$ in a \emph{bottom-up greedy manner}: 
Starting from the level $i = -1$, we will aim to match points in $\perD \uplus \anotherperD$ as much as we can within each level $G_i$. Those remaining unmatched points will then be considered at the next level $G_{i+1}$: in particular, within each \emph{non-terminal cell} in $G_{i+1}$, we will match the maximal possible unmatched points in $\perD$ to unmatched points in $\anotherperD$ so far, and pass the remainder unmatched points (which can now only come from either $\perD$ or $\anotherperD$, but not both) to its parents. 
Within a \emph{terminal cell $v$} in $G_{i+1}$, we match every unmatched point $p$ from $\perD \cap v$ and from $\anotherperD \cap v$ to its closest point $\pi(p) \in \aL$ in the diagonal.  
Finally, at the root cell (in level $\log \Delta$), any unpaired points from either $\perD$ or $\anotherperD$ will be paired to the diagonal, as the root is a terminal cell. 
Note that by construction, at any level $i$, $|\aV_i^\perD - \aV_i^\anotherperD|_T$ is exactly the number of points from $\perD \uplus \anotherperD$ that could not be matched at level $i$ or below under such a greedy augmented matching and will subsequently need to be matched in grid $G_j$, $j \ge i+1$. 
See Figure \ref{fig:example} for a simple illustration. 
\begin{figure}
    \centering
    \includegraphics[scale=0.2]{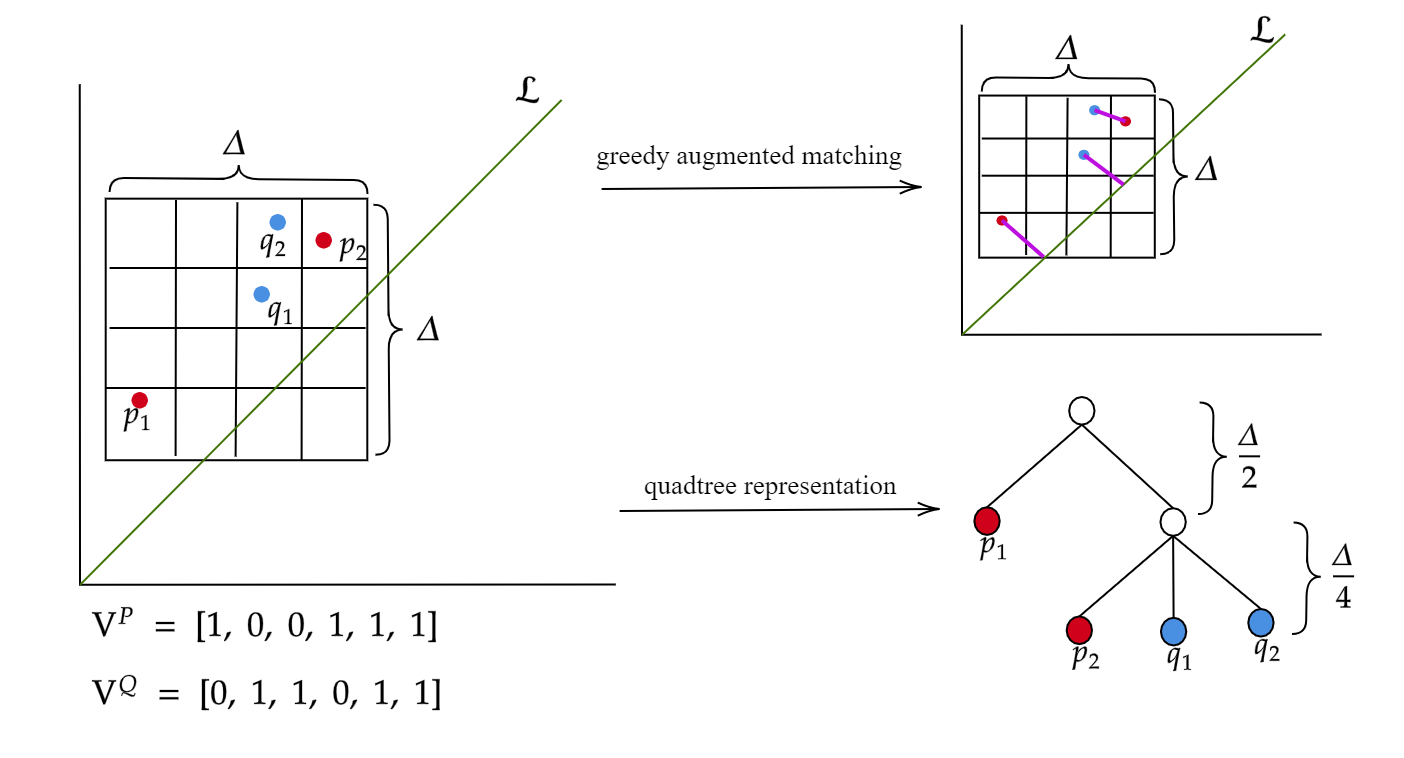}
    \vspace*{-0.15in}
    \caption{A simple example of the vector and quadtree representations of the persistence diagrams $P = \{p_1, p_2\}$ and $Q = \{q_1, q_2\}$.}
    \label{fig:example}
\end{figure}

\begin{lemma}
    There is a constant $C$ such that $\dW( \perD, \anotherperD) \leq C \cdot |\aV^\perD - \aV^\anotherperD|_T$.
    \label{lemma:lower_bound_l1}
\end{lemma}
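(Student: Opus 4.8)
The plan is to exhibit a concrete augmented matching for $\perD$ and $\anotherperD$ whose cost is bounded by $C \cdot |\aV^\perD - \aV^\anotherperD|_T$; since $\dW(\perD,\anotherperD)$ is the minimum cost over all augmented matchings, this immediately yields the claim. The natural candidate is the greedy augmented matching $\greedyM$ introduced just above the lemma statement. So the first step is to recall the key bookkeeping fact already asserted in the text: at each level $i$, the quantity $|\aV_i^\perD - \aV_i^\anotherperD|_T$ equals the number of points of $\perD \uplus \anotherperD$ that remain unmatched after processing levels $-1, \dots, i$ and are therefore passed up to level $i+1$. I would make this precise by induction on $i$, tracking separately, cell by cell, the surplus of $\perD$-points over $\anotherperD$-points (or vice versa) in non-terminal cells, and noting that in terminal cells all leftover points get matched to the diagonal and contribute $0$ to $|\aV_i^\perD - \aV_i^\anotherperD|_T$, consistent with Case 2 of the definition.

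The second step is to charge the cost of $\greedyM$ against these per-level counts. Every pair $(p,q) \in \greedyM$ of the first type ($p \in \perD$, $q \in \anotherperD$) is formed inside some non-terminal cell $v$ at some level $\ell(v) = i$; since both points lie in a cell of side length $2^i$ in $\reals^2$, we have $\|p - q\|_2 \le \sqrt{2}\cdot 2^i$. Every pair of the second or third type ($p$ matched to $\pi(p)$) is formed inside some terminal cell $v$ at level $i$; since $v$ intersects the diagonal $\aL$ and $p \in v$, the distance from $p$ to $\aL$ is at most the diameter of $v$, so $\|p - \pi(p)\|_2 \le \sqrt{2}\cdot 2^i$ as well. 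Thus every point that is "resolved" at level $i$ — whether matched to another diagram point at level $i$ or sent to the diagonal at level $i$ — contributes at most $\sqrt{2}\cdot 2^i$ to the total cost of $\greedyM$. A point resolved at level $i$ by a diagram-to-diagram match was, in particular, counted in $|\aV_{i-1}^\perD - \aV_{i-1}^\anotherperD|_T$ (it was passed up from level $i-1$); a point resolved at level $i$ by a diagonal match lived in a terminal cell and so may instead be bounded using the count passed up to that terminal cell. Either way, the number of points resolved at level $i$ is $O\big(|\aV_{i-1}^\perD - \aV_{i-1}^\anotherperD|_T\big)$, and summing over levels gives
\begin{align*}
\dW(\perD,\anotherperD) \;\le\; \sum_{(p,q)\in\greedyM}\|p-q\|_2 \;\le\; \sqrt{2}\sum_{i} 2^i \cdot O\big(|\aV_{i-1}^\perD - \aV_{i-1}^\anotherperD|_T\big) \;=\; O\big(|\aV^\perD - \aV^\anotherperD|_T\big),
\end{align*}
using that $2^i = 2\cdot 2^{i-1}$ to re-index and match the definition of $\dTper$ in Equation (\ref{eqn:dTper}).

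The main obstacle I anticipate is the careful accounting of the diagonal (terminal) cells. Because $|p_k - q_k|_T$ is set to $0$ in a terminal cell, points that get sent to the diagonal are "invisible" to $|\aV^\perD - \aV^\anotherperD|_T$ at the level where they are resolved, so one cannot charge their transport cost to that level's count directly; one must instead charge it to the count at the level just below the terminal cell (i.e., the unmatched mass that entered the terminal cell from its children), and argue that this mass, summed appropriately over all terminal cells and all levels, is still $O(|\aV^\perD - \aV^\anotherperD|_T)$. A second, more minor point requiring care is the root: it is a terminal cell, and one must confirm that every point not matched earlier is indeed within $\sqrt{2}\cdot 2^{\log_2\Delta} = O(\Delta)$ of the diagonal, and that the corresponding mass was accounted for at level $\log_2\Delta - 1$. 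Handling both points cleanly is where the constant $C$ is ultimately determined, and it is worth doing the level-by-level induction carefully rather than by a hand-wave.
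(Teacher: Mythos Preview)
Your plan is the paper's proof: bound the cost of the greedy augmented matching $\greedyM$ level by level, with each pair resolved at level $i$ costing at most $\sqrt{2}\cdot 2^i$, and sum. The only difference is that the paper sharpens your per-level count to the exact telescoping difference $|\aV_{i-1}^\perD - \aV_{i-1}^\anotherperD|_T - |\aV_i^\perD - \aV_i^\anotherperD|_T$, which makes the terminal-cell ``obstacle'' you flag disappear automatically: any point resolved at level $i$ (whether matched to another diagram point or sent to the diagonal in a terminal cell) was necessarily sitting in a \emph{non-terminal} cell at level $i-1$---otherwise it would already have been resolved there---so it is counted in the first term and contributes $0$ to the second; summing then gives $C = 2\sqrt{2}$ with no separate root or terminal-cell analysis needed.
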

\begin{proof}    
Consider the greedy augmented matching $\greedyM$ we described above, induced by pairing points or pairing points with their diagonal projection greedily within the same cells of the grids $G_{-1}, G_0, G_1, \dots$ in a bottom-up manner. 
First, given $(p,q) \in \greedyM$, we say that $(p,q)$ is paired in level-$i$, if the lowest level any quadtree cell containing both $p, q$ is level-$i$; intuitively, $(p,q)$ are paired greedily in this cell in level-$i$. 
We now use $\greedyM_i \subseteq \greedyM$ to denote the set of pairs from level-$i$. 
For each level $i$, let $cost(i) = \sum_{(p, q) \in \greedyM_i} \|p - q\|_2$ be the total cost incurred by all those pairs from $\greedyM_i$, and obviously,
\begin{align}\label{eqn:greedycost} 
    \sum_i cost(i) &= \sum_{(p,q)\in \greedyM} \|p - q\|_2 \ge \dW(\perD, \anotherperD), 
\end{align} 
where the right inequality holds as $\dW(\perD, \anotherperD)$ is the smallest total cost of any augmented matching and the greedy augmented matching $\greedyM$ is an augmented matching. 

There are no pairings induced in the grid $G_{-1}$ (of size $1/2$) since the minimum inter-point distance is $1$ and the minimum distance between a point and its diagonal is $1$ as well. Hence we have that $|\aV_{-1}^\perD - \aV_{-1}^\anotherperD|_T = |\perD| + |\anotherperD|$. 
Since all points from $P$ and $Q$ will remain unpaired in level $-1$ (i.e, within cells of grid $G_{-1}$), we then have that there are exactly 
$$|\perD| + |\anotherperD| - |\aV_0^\perD - \aV_0^\anotherperD|_T = |\aV_{-1}^\perD - \aV_{-1}^\anotherperD|_T - |\aV_0^\perD - \aV_0^\anotherperD|_T $$
total number of points from $\perD \uplus \anotherperD$ that can either be paired to each other or matched to the diagonal in grid cells $G_0$. 
As the maximal distance between any pair of matched points is $2^i\sqrt{2}$ within a cell in $G_i$, the maximum cost incurred by all matched points in $G_0$ is 
$cost(0) \le \sqrt{2} (|\aV_{-1}^\perD - \aV_{-1}^\anotherperD|_T - |\aV_0^\perD - \aV_0^\anotherperD|_T).$ 
In general, the maximum cost of the matched points in $G_i$ is 
$$cost(i) \le 2^i \cdot \sqrt{2} (|\aV_{i - 1}^\perD - \aV_{i - 1}^\anotherperD|_T - |\aV_i^\perD - \aV_i^\anotherperD|_T).$$ 
Hence combining Equation (\ref{eqn:greedycost}), the total cost (i.e, $\sum_{(p, q) \in \greedyM} ||p - q ||_2$) of the greedy augmented matching $\greedyM$, is bounded from above by: 
\begin{equation*}
    \sum_{(p, q) \in \greedyM} ||p - q ||_2 \leq \sum_{i = 0}^{\log_2 \Delta + 1} 2^i \cdot \sqrt{2} (|\aV_{i - 1}^\perD - \aV_{i - 1}^\anotherperD|_T - |\aV_i^\perD - \aV_i^\anotherperD|_T) \le 2\sqrt{2} |\aV^P - \aV^Q|_T. 
\end{equation*}
By the right inequality of Equation (\ref{eqn:greedycost}), the claim then follows. 
\end{proof}

    \begin{lemma}
    There is a constant $C'$ such that the expected value of $\dTper(\perD, \anotherperD)$ is bounded by
        $E[\dTper(\perD, \anotherperD)] = E[|\aV^\perD - \aV^\anotherperD|_T] \leq C' \cdot \log \Delta \cdot \dW( \perD,  \anotherperD)$. 
    \label{lemma:upper_bound_l1}
    \end{lemma}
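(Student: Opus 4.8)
The plan is to bound $E[\dTper(\perD,\anotherperD)]$ by comparing against an \emph{optimal} augmented matching, rather than the greedy one used in Lemma~\ref{lemma:lower_bound_l1}. Fix an augmented matching $\Gamma^*$ realizing $\dW(\perD,\anotherperD) = \sum_{(p,q)\in\Gamma^*}\|p-q\|_2$. The first step is a deterministic inequality, valid for every realization of the random shift. For a non-terminal level-$i$ cell $c$, write $\perD(c),\anotherperD(c)$ for the number of points of $\perD$, resp. $\anotherperD$, lying in $c$. Pairs of $\Gamma^*$ of the form $(p,q)$ with $p,q\in c$ contribute equally to $\perD(c)$ and to $\anotherperD(c)$, hence cancel, so $|\perD(c)-\anotherperD(c)|$ is at most the number of points $x\in(\perD\uplus\anotherperD)\cap c$ whose $\Gamma^*$-partner is either outside $c$ or lies on the diagonal $\aL$. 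Substituting this into $\dTper(\perD,\anotherperD) = \sum_{i=-1}^{\log_2\Delta} 2^i\sum_{\text{non-term }c\in G_i}|\perD(c)-\anotherperD(c)|$ and exchanging the order of summation, we get $\dTper(\perD,\anotherperD)\le \sum_{(p,q)\in\Gamma^*}\beta(p,q)$, where $\beta(p,q)$ sums $2^i$ over all levels $i$ at which a genuine endpoint of the pair sits in a non-terminal cell that does not contain its partner (so for a pair of genuine points both endpoints can contribute, while for a pair $(p,\pi(p))$ only $p$ contributes, since $\pi(p)\notin\perD\uplus\anotherperD$).

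The second step is to show $E[\beta(p,q)] = O(\|p-q\|_2\log\Delta)$ for each pair, in two cases. For a pair $(p,q)$ with $p\in\perD$, $q\in\anotherperD$ both genuine, $\beta(p,q)\le 2\sum_{i:\,p,q\text{ lie in different level-}i\text{ cells}}2^i$, and the standard randomly-shifted-quadtree estimate gives $\Pr[p,q\text{ separated at level }i]\le\min\{1,\,c\|p-q\|_2/2^i\}$ (union bound, over the two coordinates, on the event that a grid line of spacing $2^i$ falls between the corresponding coordinates of $p$ and $q$). Summing $2^i\min\{1,c\|p-q\|_2/2^i\}$ over $i\in\{-1,\dots,\log_2\Delta\}$ — a geometric head plus a tail of at most $O(\log\Delta)$ equal terms — yields $E[\beta(p,q)] = O(\|p-q\|_2\log\Delta)$, exactly as in the proof of Theorem~\ref{thm:L1embedding}. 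For a pair $(p,\pi(p))$, we have $\beta(p,\pi(p)) = \sum_{i:\,p\text{'s level-}i\text{ cell is non-terminal}}2^i$; the key geometric claim is that a level-$i$ cell containing $p$ also contains $\pi(p)$ — hence intersects $\aL$ and is terminal — unless, in one of the two coordinates, a grid line of spacing $2^i$ falls between $p_j$ and $(\pi(p))_j$, which for each $j$ happens with probability at most $|p_j-(\pi(p))_j|/2^i = \tfrac12|p_1-p_2|/2^i$. A union bound gives $\Pr[p\text{'s level-}i\text{ cell non-terminal}]\le\min\{1,\,c'\|p-\pi(p)\|_2/2^i\}$ (using $\|p-\pi(p)\|_2 = \sqrt2\,\mathrm{dist}(p,\aL) = \tfrac{1}{\sqrt2}|p_1-p_2|$), and the same series computation gives $E[\beta(p,\pi(p))] = O(\|p-\pi(p)\|_2\log\Delta)$.

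Finally, linearity of expectation over $\Gamma^*$ combines the per-pair bounds: $E[\dTper(\perD,\anotherperD)]\le\sum_{(p,q)\in\Gamma^*}E[\beta(p,q)] = O(\log\Delta)\sum_{(p,q)\in\Gamma^*}\|p-q\|_2 = O(\log\Delta)\cdot\dW(\perD,\anotherperD)$, which is the asserted bound with $C' = O(1)$.

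I expect the main obstacle to be the geometric estimate in the diagonal case: one must verify carefully that the terminal/non-terminal status of $p$'s level-$i$ cell is governed by the same ``a grid line of spacing $2^i$ separates two nearby points'' event used for ordinary pairs, with $\pi(p)$ playing the role of $p$'s partner and $\|p-\pi(p)\|_2$ playing the role of the transported distance — in particular that $p$ and $\pi(p)$ differ by exactly $\tfrac12|p_1-p_2|$ in each coordinate and that a cell containing $\pi(p)$ necessarily meets the $45^{\circ}$ line $\aL$. Everything else — the deterministic cancellation inequality and the level-wise geometric series — is routine and parallels the Euclidean optimal-transport analysis of Indyk and Thaper.
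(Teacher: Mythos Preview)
Your argument is correct, but it follows a genuinely different route from the paper's. The paper proves the lemma by a black-box reduction: it passes to the augmented multisets $\widehat{\perD}=\perD\cup\pi(\anotherperD)$ and $\widehat{\anotherperD}=\anotherperD\cup\pi(\perD)$, observes coordinatewise that $|\aV^{\perD}-\aV^{\anotherperD}|_T\le |\aV^{\widehat{\perD}}-\aV^{\widehat{\anotherperD}}|_T\le \|\aV^{\widehat{\perD}}-\aV^{\widehat{\anotherperD}}\|_1$ (any cell where the augmented counts differ from the original counts must contain a diagonal point, hence is terminal and contributes $0$), and then invokes Theorem~\ref{thm:L1embedding} together with Observation~\ref{ot_w1} to finish. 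In contrast, you fix an optimal augmented matching $\Gamma^*$ and bound $\dTper$ directly by a per-pair charge $\beta(p,q)$, treating diagonal pairs $(p,\pi(p))$ through the observation that a non-terminal cell containing $p$ cannot contain $\pi(p)\in\aL$, so the event ``$p$'s level-$i$ cell is non-terminal'' is contained in the usual separation event for the pair $(p,\pi(p))$. Your approach is more self-contained (it reproves the Indyk--Thaper expectation bound inline rather than quoting it) and makes transparent \emph{why} zeroing out terminal cells is the right modification: the diagonal projection $\pi(p)$ acts as a virtual partner at exactly the cost $\|p-\pi(p)\|_2$ that the persistence $1$-Wasserstein distance charges. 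The paper's approach is shorter and more modular, cleanly isolating the new content (the coordinatewise inequality) from the classical randomly-shifted-quadtree analysis.
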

\begin{proof} 
Set $\widehat \perD = \perD \cup \pi (\anotherperD)$ and $\widehat \anotherperD = \anotherperD \cup \pi (\perD)$ as before. 
    For a given grid $G_i$ and some coordinate $k$ in $\aV_i^\perD$ and $\aV_i^\anotherperD$, let $p_k$ be the value of coordinate $k$ in $\aV_i^\perD$ and $q_k$ be the value of coordinate $k$ in $\aV_i^\anotherperD$. 
Analogously, let $\aV^{\widehat{\perD}}$ and $\aV^{\widehat{\anotherperD}}$ be the vector representations w.r.t multisets $\widehat{P}$ and $\widehat{Q}$, respectively, and let $\widehat p_k$ (resp. $\widehat q_k$) be the value of coordinate $k$ in $\aV_i^{\widehat \perD}$ (resp. $\aV_i^{\widehat \anotherperD}$). 
    Note that $\widehat p_k \ge p_k$ and $\widehat q_k\ge q_k$. 
    
    (i) Now if $\widehat p_k > p_k$ or $\widehat q_k > q_k$, then there exists at least one point $x \in \pi(\perD) \cup \pi(\anotherperD)$ in the cell $v$ associated with coordinate $k$. In other words, this cell $v$ must be a terminal cell, and $|p_k - q_k|_T = |\widehat p_k - \widehat q_k|_T = 0$. 
    
    (ii) Otherwise if the conditions in (i) do not hold, it must be that $\widehat{p_k} = p_k$ and $\widehat{q_k} = q_k$, in which case we also have that $|p_k - q_k|_T = |\widehat p_k - \widehat q_k|_T$. 
    
    Combining (i) and (ii) we then have that $|\aV^\perD -\aV^\anotherperD|_T \leq |\aV^{\widehat \perD} - \aV^{\widehat\anotherperD}|_T$. 
    
    On the other hand, by the definition of metric $| \cdot |_T$, we know that $|\widehat p_k - \widehat q_k|_T \leq |\widehat p_k - \widehat q_k|$, implying that  $|\aV^{\widehat\perD} - \aV^{\widehat\anotherperD}|_T \leq \|\aV^{\widehat\perD} - \aV^{\widehat\anotherperD}\|_1$. 
    Let $\mu_{\widehat\perD}$ and $\nu_{\widehat\anotherperD}$ be the discrete measures induced by $\widehat{\perD}$ and $\widehat{\anotherperD}$ respectively. 
    By Theorem \ref{thm:L1embedding}, there is some constant $C$ such that $E[\|\aV^{\widehat\perD} - \aV^{\widehat\anotherperD}\|_1] \leq C \cdot \log \Delta \cdot \dOT(\mu_{\widehat \perD}, \nu_{\widehat \anotherperD})$. From Observation \ref{ot_w1}, $\dOT(\mu_{\widehat \perD}, \nu_{\widehat \anotherperD}) \leq 2 \cdot \dW(\perD, \anotherperD)$. Therefore, 
    \begin{equation*}
        E[|\aV^{\perD} - \aV^{\anotherperD}|_T] \leq E[|\aV^{\widehat\perD} - \aV^{\widehat\anotherperD}|_T]\leq E[\|\aV^{\widehat\perD} - \aV^{\widehat\anotherperD}\|_1] \leq 2 \cdot C \cdot \log \Delta \dW( \perD, \anotherperD). 
    \end{equation*}
    The lemma then follows. 
\end{proof}

\subsection{Approximation algorithm via flowtree}

We now propose an alternative approximation algorithm for $\dW(\perD, \anotherperD)$. The high level idea is the same as the flowtree algorithm described in Section \ref{subsection:qt_prelim}: in particular, we first compute the optimal flow for points in $\perD$ and $\anotherperD$ along the randomly shifted quadtree $\aT$ as constructed earlier, but now with the modification that a point can be paired to diagonal. It turns out that this leads to the same greedy augmented matching $\greedyM$ we described at the beginning of Section \ref{subsubsec:approxL1}. 
Then, similar to flowtree, under this greedy augmented matching $\greedyM$, we use the Euclidean distance between a pair of matched points (instead of using the tree-distance as for $\dTper(\perD, \anotherperD)$) to measure the cost of each pair of matched points. This leads to the following 
\textit{modified flowtree estimate}: 
\begin{equation}
    \dW_F(\perD, \anotherperD) = \sum_{(p, q) \in \greedyM} ||p - q||_2, 
\end{equation}
and in our second algorithm, we will use $\dW_F(\perD, \anotherperD)$ as an approximation of the true 1-Wasserstein distance $\dW(\perD, \anotherperD)$. 

From an implementation point of view, unlike $\dTper(\perD, \anotherperD)$, which can be computed as the $L_1$-distance between two vectors, we now must explicitly compute the greedy augmented matching, $\greedyM$ between $\perD$ and $\anotherperD$. (Note that this greedy augmented matching was only used in proving the approximation guarantee for $\dTper(\perD, \anotherperD)$, and not needed for its computation.) 
Computing this greedy augmented matching (and calculating its cost) takes the same time as computing the greedy flow in the original flowtree algorithm  \ref{subsection:qt_prelim}. 
Furthermore, it is easy to see that in the proof of Lemma \ref{lemma:lower_bound_l1}, we in fact showed that $\dW(\perD, \anotherperD) \le \dW_F(\perD, \anotherperD) \le C \cdot |\aV^P - \aV^Q|_T$ (see Eqn (\ref{eqn:greedycost})). 
Combining this with Theorem \ref{thm:L1embedding}, we thus obtain the following approximation result for this modified flowtree estimate. 

\begin{theorem}
\label{thm:mod_ft}
Given two persistence diagrams $\perD$ and $\anotherperD$ where $s = \max(|P|, |Q|)$ and $\Delta$ is the spread of point set $\perD \cup \anotherperD$, we can compute $\dW_F(\perD, \anotherperD)$ in time $O(s\log\Delta)$ using a randomly shifted quadtree.  
Additionally, the expected value of $\dW_F(\perD, \anotherperD)$ is an $O(\log \Delta)$-approximation of the 1-Wasserstein distance $\dW(\perD, \anotherperD)$; i.e. there are constants $C_1$ and $C_2$ such that $C_1 \cdot \dW(\perD, \anotherperD) \leq E[\dW_F(\perD, \anotherperD)] \leq C_2 \cdot \log \Delta \cdot \dW(\perD, \anotherperD)$.  
\end{theorem}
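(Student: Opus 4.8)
The plan is to establish Theorem~\ref{thm:mod_ft} by a two-part argument, mirroring the structure already used for the $L_1$-embedding result. For the \emph{running-time} claim, I would observe that the greedy augmented matching $\greedyM$ is computed in exactly the same way as the greedy flow $\Gflow$ underlying the flowtree estimate of Section~\ref{subsection:qt_prelim}: process the quadtree $\aT$ bottom-up, and at each cell match as many unmatched points of $\perD$ against unmatched points of $\anotherperD$ as possible, passing the leftover (all from one side) to the parent, with the one modification that inside a terminal cell (and in particular at the root) every leftover point is matched to its diagonal projection rather than passed up. Each point participates in $O(\log\Delta)$ cells along its root-to-leaf path, so the whole matching, together with the accumulated cost $\sum_{(p,q)\in\greedyM}\|p-q\|_2$, is produced in $O(s\log\Delta)$ time once the quadtree (of size $O(s\log\Delta)$) is built.

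For the \emph{approximation} claim I would split into the lower and upper bounds on $E[\dW_F(\perD,\anotherperD)]$. The lower bound $\dW(\perD,\anotherperD)\le\dW_F(\perD,\anotherperD)$ is immediate and deterministic: $\greedyM$ is by construction a valid augmented matching in the sense of Definition~\ref{def:dWperDV2}, so its total Euclidean cost is at least the minimum over all augmented matchings, which is $\dW(\perD,\anotherperD)$ --- this is exactly the right-hand inequality of Equation~(\ref{eqn:greedycost}). For the upper bound, I would reuse the per-level accounting from the proof of Lemma~\ref{lemma:lower_bound_l1}: the number of points of $\perD\uplus\anotherperD$ still unmatched after processing grid $G_{i-1}$ but matched at level $i$ is exactly $|\aV_{i-1}^\perD-\aV_{i-1}^\anotherperD|_T-|\aV_i^\perD-\aV_i^\anotherperD|_T$, and each such pair (whether two diagram points or a point and its diagonal projection) lies within a cell of $G_i$, hence costs at most $2^i\sqrt2$ in Euclidean distance. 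Summing the telescoping bound gives $\dW_F(\perD,\anotherperD)\le 2\sqrt2\,|\aV^\perD-\aV^\anotherperD|_T=2\sqrt2\,\dTper(\perD,\anotherperD)$ deterministically, and then taking expectations and invoking Lemma~\ref{lemma:upper_bound_l1} (which already bounds $E[\dTper(\perD,\anotherperD)]$ by $C'\log\Delta\cdot\dW(\perD,\anotherperD)$ via Theorem~\ref{thm:L1embedding} and Observation~\ref{ot_w1}) yields the claimed $O(\log\Delta)$ factor with $C_2=2\sqrt2\,C'$ and $C_1=1$.

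The one point that needs genuine care --- and which I expect to be the main (though minor) obstacle --- is verifying that the greedy flow along the quadtree with diagonal-matching enabled really does induce precisely the matching $\greedyM$ described at the start of Section~\ref{subsubsec:approxL1}, i.e.\ that the two descriptions coincide and that the cost identity "number of survivors past level $i$ equals $|\aV_{i-1}^\perD-\aV_{i-1}^\anotherperD|_T-|\aV_i^\perD-\aV_i^\anotherperD|_T$" transfers verbatim from the proof of Lemma~\ref{lemma:lower_bound_l1}. Concretely I must check that matching greedily to the diagonal inside a terminal cell never leaves a point that ought to have been matched to an opposite-diagram point (it cannot, since diagonal matching only happens to points that would otherwise be passed up, and the cost of going to the diagonal within a cell of $G_i$ is still at most $2^i\sqrt2$), and that the non-uniqueness of the greedy flow does not affect the cost bound --- which it does not, because the per-level survivor counts $|\aV_i^\perD-\aV_i^\anotherperD|_T$ are intrinsic to the quadtree and independent of the particular greedy choices, exactly as in the flowtree analysis of \cite{Flowtree}. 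Once this correspondence is pinned down, everything else is the routine telescoping already carried out above.
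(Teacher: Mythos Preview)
Your proposal is correct and follows essentially the same route as the paper: the paper's argument is simply to observe that the proof of Lemma~\ref{lemma:lower_bound_l1} already established the deterministic sandwich $\dW(\perD,\anotherperD)\le\dW_F(\perD,\anotherperD)\le C\cdot|\aV^\perD-\aV^\anotherperD|_T$ (via Equation~(\ref{eqn:greedycost}) and the telescoping bound), and then to take expectations and invoke the upper bound on $E[|\aV^\perD-\aV^\anotherperD|_T]$. Your write-up is more explicit about the correspondence between the greedy flow and $\greedyM$ and about the per-level survivor counts, but these are exactly the ingredients the paper is tacitly importing from Section~\ref{subsubsec:approxL1}.
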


\noindent{{\bf Remark.}} We remark that while these two approximation schemes, $\dTper(\perD, \anotherperD)$ and $\dW_F(\perD, \anotherperD)$, have similar approximation guarantees for $\dW(\perD, \anotherperD)$, in practice, the modified flowtree based approach has much higher accuracy. This is consistent with the performance of flowtree algorithm versus the $L_1$-embedding approach for general optimal distance \cite{Flowtree}. In contrast, the benefit of the $L_1$ embedding approach is that it is easy to compute  $\dTper(\perD, \anotherperD)$. Also, each persistence diagram is now mapped to a vector representation, and the distance is $L_1$-distance among these vector representations. One could combine this with methods such as locality-sensitive-hashing for more efficient approximate $k$-nearest neighbor queries. In general, such a $L_1$-norm also makes $\dTper(\cdot, \cdot)$ potentially more suitable for downstream machine learning pipelines.

\section{Experimental results}
\label{section:exp_res}
We evaluate both the runtime and accuracy of the modified flowtree and $L_1$-embedding against the \textit{Hera} method of \cite{Hera}. We use the implementation of Hera provided in the GUDHI library \cite{gudhi} for testing. We run the experiments using an Intel Core i7-1065G7 CPU @ 1.30 GHz and 12.0GB RAM. Additionally, the implementations for both the modified flowtree and $L_1$-embedding were done in C++ (wrapped in python for evaluation) and are based on the code provided in \cite{Flowtree}. 

\subparagraph*{Datasets.} For our experiments, we use both synthetic persistence diagrams, as well as persistence diagrams generated from real data. For synthetic data, we generate two sets of persistence diagrams: called ``\emph{synthetic-uniform}" and  ``\emph{synthetic-Gaussian}", which are generated by a uniform sample and a sample w.r.t. a Gaussian distribution on the birth-death plane to obtain the persistence diagrams, respectively. For real datasets, we use persistence diagrams generated from the so-called Reddit data sets (which is a collection of graphs) \cite{real_datasets}, and from the ModelNet10 \cite{ModelNet} dataset of shapes. Details of these datasets are in Appendix \ref{appendix:datasets}. 

\subparagraph*{Speed comparison.} We compare the running time of our new approximation algorithms with that of Hera \cite{Hera} -- note that we do not directly compare with the exact algorithm by Dionysus, because as reported by \cite{Hera}, Hera is 50 times to 400 times faster than Dionysus. 
Note that Hera is also an approximation algorithm, and there is a parameter $\eps$ to adjust its approximation factor $(1 +\eps)$. By setting this parameter to be very small ($\eps = 0.01$), we use the distance computed by Hera as ground truth later when we measure approximation accuracy; see Table \ref{tab:hera_error}.  

The comparison of the running times of our approaches with that of Hera (for a range of different approximation factors) can be found in Figure \ref{fig:runtime}, which summarizes the runtime for each method on a {\bf log-scale} using both randomly generated diagrams as well as the reddit-binary dataset (real persistence diagrams from graphs). We compare the speed of our modified flowtree and $L_1$ embedding against Hera where the parameter $\eps$ for Hera is set to be 300000. However, note that as one relaxes the approximation parameter $\eps$, the speed of Hera in fact does not improve much (as shown in Figure \ref{fig:runtime}). Thus our speed gains remain no matter which choice of $\eps$ we use for Hera. Additionally, the true approximation error for Hera also does not decrease much; see Table \ref{tab:hera_error}. To get the approximation error for Hera, we find the true Wasserstein distance by using the $\texttt{wasserstein\_distance}$ function from the GUDHI library which uses the Python Optimal Transport library \cite{flamary2017pot} and is based on ideas from \cite{lacombe2018large}.
The results in Figure \ref{fig:runtime} indicate that the modified flowtree approach is between $50$ and $1000$ times faster than Hera and the difference increases as the size of the diagrams increases. Similarly, the $L_1$ embedding approach is between $150$ and $4900$ times faster than Hera. Both are order of magnitudes faster than Hera; but the price to pay is that the approximation factor is worse for our approach as shown in Appendix \ref{appendix:results} Figure \ref{fig:vp_error}. 

\subparagraph*{Approximation accuracy comparison.} 
To measure the accuracy of the approximation of both the modified flowtree and $L_1$ embedding approach, we first measure the average relative error and standard deviation of both methods on all datasets using $L_1$, $L_2$, and $L_\infty$ as the ground metrics. In particular, given a ground truth distance $d$ and an approximate distance $\tilde{d}$, the relative error is $\rho(d) = \frac{|d - \tilde{d}|}{d}$. 
As mentioned earlier, we use the output of Hera for $\eps=0.01$ as ground truth, and compare our approximated distance with that. The results are summarized in Figure \ref{fig:vp_error} and a detailed table of the average relative error and standard deviation is in Table \ref{tab:average error}. 
Overall, while our modified flowtree is slower than the $L_1$-embedding approach, it achieves much better approximation error. 
For our experiments, we generate a quadtree only once for all persistence diagrams in a given dataset and calculate error for the approximated distances for the single quadtree. However, note that by constructing several quadtrees and averaging the distance estimates or taking the smallest estimated, we could potentially reduce the approximation error. 

In addition to the average error of our approximate distances, we can also consider the efficacy of both methods in terms of nearest neighbor search and ranking accuracy. To evaluate nearest neighbor search, we first split the set of persistence diagrams into query diagrams and candidate diagrams. Then, we measure recall@$m$ accuracy where recall@$m$ is defined as the fraction of queries that have the true nearest neighbor within the top $m$-ranked candidates returned by the evaluated method. The results are reported in Figure \ref{fig:recall@m}. For ranking accuracy, the detailed results are in Appendix \ref{appendix:results}. To summarize, the modified flowtree approach is more accurate than the $L_1$ embedding approach both in terms of nearest neighbor search and closeness to the true ranking of candidate diagrams for a fixed query diagram. Both approaches appear to have a lower degree of accuracy 
on diagrams where there a higher proportion of points near the diagonal. This may be due to the increased possibility of erroneously matching points to the diagonal.

In summary, we note that both our new approximation algorithms significantly improve the speed previously best-known Hera algorithm by orders of magnitudes, but with worse approximation factors. Empirically, the approximation factors remain constant despite our theoretical results suggestion an $O(\log \Delta)$-approximation. In particular, the relative approximation error of flowtree is often smaller than 0.50 for $L_2$ ground metric (see Appendix \ref{appendix:results} Table \ref{tab:average error}). 

\begin{table}[ht]
    \centering
    \begin{tabular}{|c|c|}
    \hline
        Maximum allowed relative error & Average relative error\\
        \hline
        0.1 & 0.00043768\\
        \hline
        1.0 & 0.003331\\
        \hline
        100000 & 0.0076055\\
        \hline
        200000 & 0.0076055\\
        \hline
    \end{tabular}
    \caption{Comparison of the maximum allowed relative error with the average experimental relative error for Hera on the reddit-binary dataset. The relative error was calculated using the GUDHI library's \texttt{wasserstein\_distance} function which uses the Python Optimal Transport library to compute exact 1-Wasserstein distance. }
    \label{tab:hera_error}
\end{table}
\begin{table}[h]
    \centering
    \begin{tabular}{|c|c|c | c | c |c|c|c|}
    \hline
        & & \multicolumn{2}{c|}{$L_1$} & \multicolumn{2}{c|}{$L_2$} & \multicolumn{2}{c|}{$L_\infty$}\\
        \cline{3-8}
        &  & embd & ft & embd & ft & embd & ft \\
         \cline{1-8}
       \multirow{2}{*}{synthetic-uniform} &  Avg. Error & 2.058 & 0.2846 & 3.161 & 0.2664 & 4.536 & 0.2595\\
         \cline{2-8}
        &  Std. Dev. & 1.034 & 0.3891 & 1.189 & 0.3488 & 1.164 & 0.3176\\
         \hline
         \multirow{2}{*}{synthetic-Gaussian}& Avg. Error  & 1.341 & 0.3358 & 2.136 & 0.2860 & 3.035 & 0.2251\\
         \cline{2-8}
         & Std. Dev. & 0.3647 & 0.1809 & 0.4139 & 0.1519 & 0.3669 & 0.1178\\
         \hline
         \multirow{2}{*}{reddit-binary} &  Avg. Error & 2.112 & 0.2899 & 3.089 & 0.3080 & 3.921 & 0.2854\\
         \cline{2-8}
         & Std. Dev. & 1.275 & 0.3859 & 2.100 & 0.5126 & 2.427 & 0.4801\\
         \hline
         \multirow{2}{*}{ModelNet10} & Avg. Error &  2.189 & 0.7331 & 2.438 & 0.4929 & 3.061 & 0.9399\\
         \cline{2-8}
         &Std. Dev. &  0.9543 & 0.4132 & 0.8136 & 0.3171 & 1.051 & 0.4448\\
         \hline
    \end{tabular}

    \caption{Average error and standard deviation for all datasets. We abbreviate the $L_1$ embedding approach to embd and the flowtree approach to ft.}
    \label{tab:average error}
\end{table}

\begin{figure}
    \centering
    \begin{subfigure}{0.45\textwidth}
      \centering
      \includegraphics[width=1.1\textwidth]{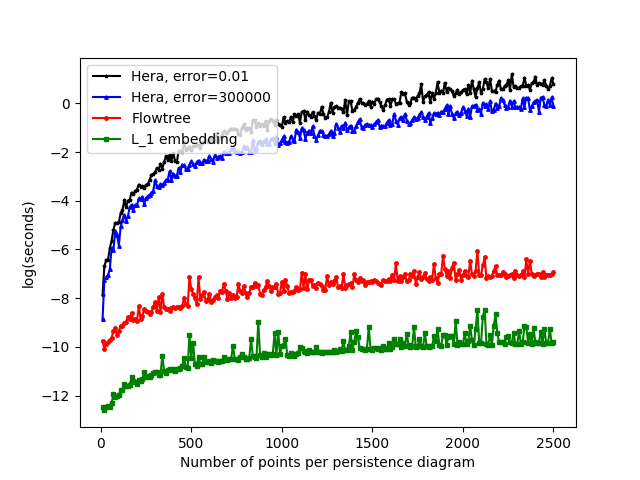}
      \subcaption{synthetic-uniform.}
    \end{subfigure}%
    \hfill
    \begin{subfigure}{0.45\textwidth}
      \centering
       \includegraphics[width=1.1\textwidth]{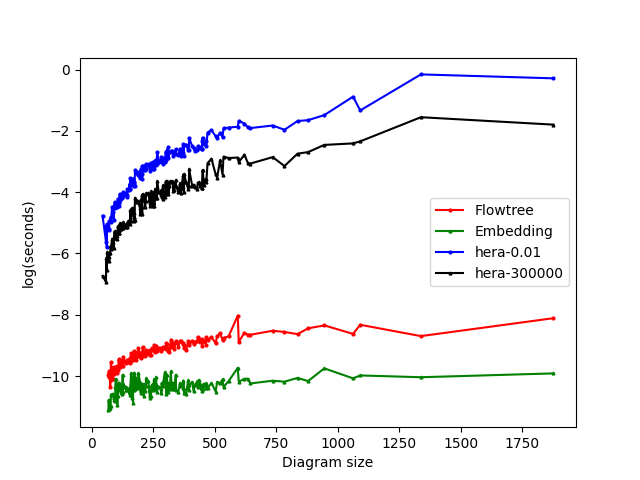}
       \subcaption{reddit-binary.}
    \end{subfigure}
    \caption{Comparison of the runtimes of HERA, flowtree, and $L_1$ embedding using both generated and real data with $L_2$ as the ground metric.}
    \label{fig:runtime}
\end{figure}

\begin{figure}
    \centering
    \begin{subfigure}{0.45\textwidth}
      \centering
      \includegraphics[width=1.1\textwidth]{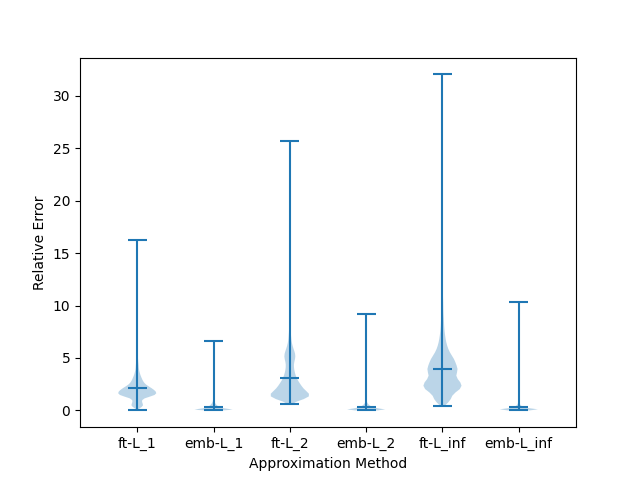}
      \subcaption{reddit-binary.}
    \end{subfigure}%
    \hfill
    \begin{subfigure}{0.45\textwidth}
      \centering
       \includegraphics[width=1.1\textwidth]{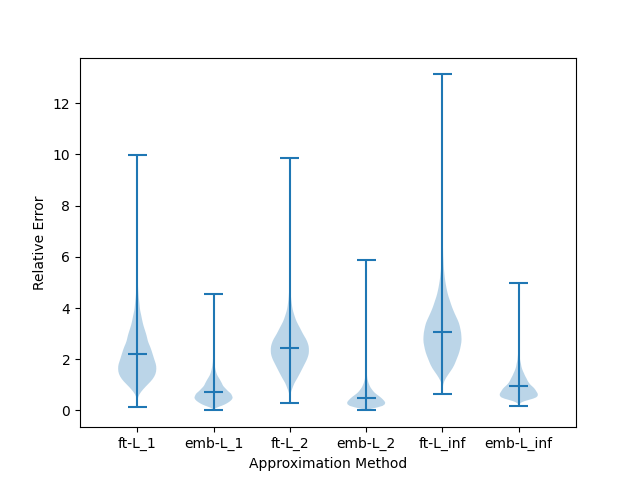}
       \subcaption{ModelNet10.}
    \end{subfigure}
        \vskip\baselineskip
    \begin{subfigure}{0.45\textwidth}
      \centering
      \includegraphics[width=1.1\textwidth]{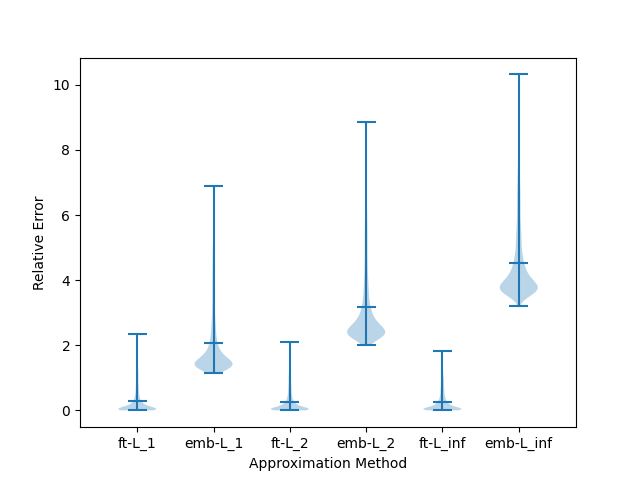}
      \subcaption{synthetic-uniform.}
    \end{subfigure}%
    \hfill
   \begin{subfigure}{0.45\textwidth}
      \centering
      \includegraphics[width=1.1\textwidth]{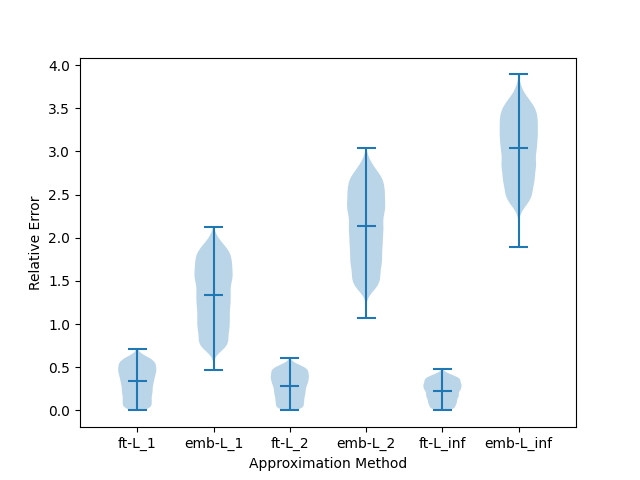}
      \subcaption{synthetic-Gaussian.}
    \end{subfigure}%
    \caption{Relative error for flowtree and quadtree approximations over $L_1$, $L_2$, and $L_\infty$ ground metrics for all datasets.}
     \label{fig:vp_error}
\end{figure}
\begin{figure}[htpb]
\centering
\begin{subfigure}{.45\textwidth}
  \centering
  \includegraphics[width=1.1\linewidth]{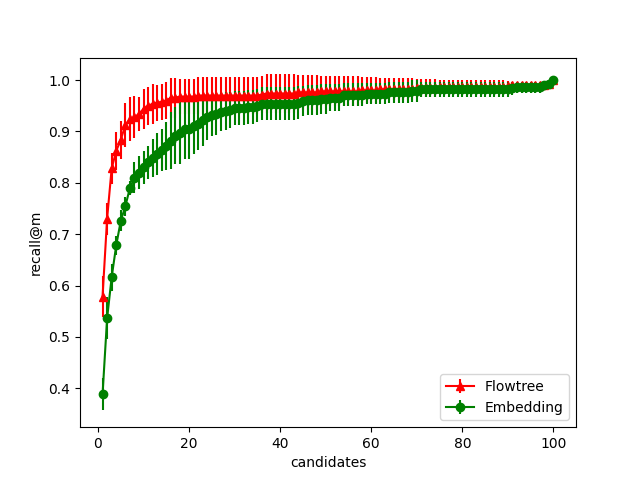}
  \caption{Recall@$m$ accuracy on reddit-binary dataset with $L_2$ ground metric.}
\end{subfigure}%
\hfill
\begin{subfigure}{.45\textwidth}
  \centering
  \includegraphics[width=1.1\linewidth]{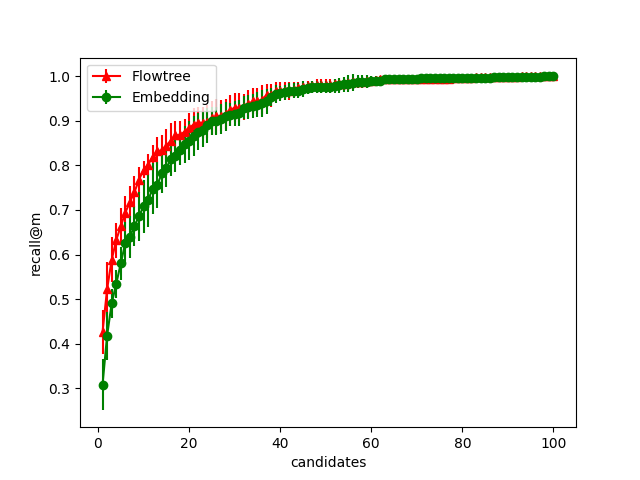}
  \caption{Recall@$m$ accuracy ModelNet10 dataset with $L_2$ ground metric.}
\end{subfigure}
\vskip\baselineskip
\begin{subfigure}{.45\textwidth}
  \centering
  \includegraphics[width=1.1\linewidth]{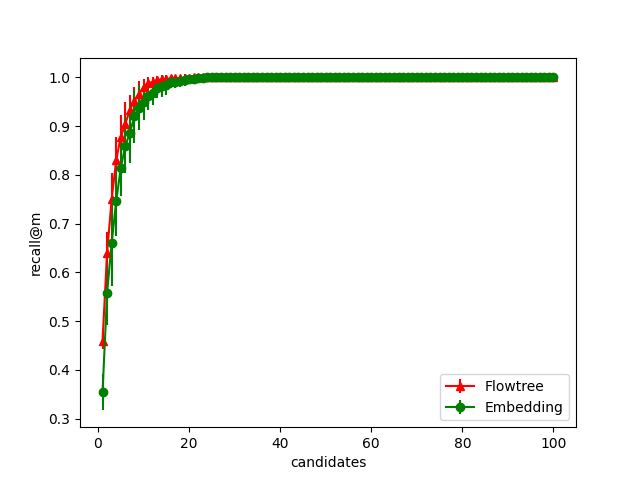}
  \caption{Recall@$m$ accuracy on synthetic-uniform dataset with $L_2$ ground metric}
\end{subfigure}%
\hfill
\begin{subfigure}{.45\textwidth}
  \centering
  \includegraphics[width=1.1\linewidth]{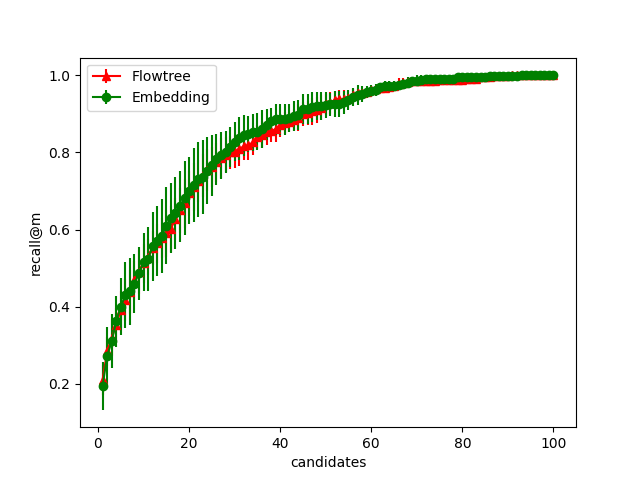}
  \caption{Recall@$m$ accuracy on synthetic-Gaussian dataset with $L_2$ ground metric}
\end{subfigure}%
\caption{Recall@$m$ accuracy on reddit-binary and ModelNet10 datasets with $L_2$ ground metric.}
\label{fig:recall@m}
\end{figure}

\newpage
\section{Concluding remarks}
In this paper, we presented two algorithms for fast approximation of the 1-Wasserstein distance between persistence diagrams based on $L_1$ embedding. While the relative error incurred by both algorithms is higher than that of Hera, the runtime is significantly faster. We also observe that approximation methods introduced are more accurate on persistence diagrams with a lower proportion of points near the diagonal. 

In the future, we are interested in using the $L_1$ embedding described with locality sensitive hashing for sub-linear nearest neighbor search. Additionally, it maybe be possible to use the ideas to compare persistence diagrams under some transformations: e.g, parallel shifting along the diagonal directions (which corresponding to that the input functions generating the persistence diagram is added by a constant term). It will also be interesting to expand this work to perform statistics on the space of persistence diagrams (e.g, computing 1-mean of persistence diagrams under our approximation distances).



\newpage
\bibliography{refs}

\begin{thebibliography}{10}

\bibitem{Flowtree}
Arturs Backurs, Yihe Dong, Piotr Indyk, Ilya Razenshteyn, and Tal Wagner.
\newblock Scalable nearest neighbor search for optimal transport.
\newblock {\em arXiv preprint arXiv:1910.04126}, 2019.

\bibitem{Bartal96}
Yair Bartal.
\newblock Probabilistic approximation of metric spaces and its algorithmic
  applications.
\newblock In {\em Proceedings of 37th Conference on Foundations of Computer
  Science}, pages 184--193. IEEE, 1996.

\bibitem{Bartal98}
Yair Bartal.
\newblock On approximating arbitrary metrices by tree metrics.
\newblock {\em STOC}, 98:161--168, 1998.

\bibitem{BertsekasAuction}
Dimitri Bertsekas.
\newblock The auction algorithm: A distributed relaxation method for the
  assignment problem.
\newblock {\em Annals of Operations Research}, 14(1):105--123, 1988.

\bibitem{Buchet2018}
Micka{\"e}l Buchet, Yasuaki Hiraoka, and Ippei Obayashi.
\newblock Persistence homology and material and informatics.
\newblock In Isao Tanaka, editor, {\em Nanoinformatics}, pages 75--95. Spring
  Singapore, Singapore, 2018.
\newblock \href {https://doi.org/10.1007/978-981-10-7617-6_5}
  {\path{doi:10.1007/978-981-10-7617-6_5}}.

\bibitem{real_datasets}
Chen Cai and Yusu Wang.
\newblock Understanding the power of persistence pairing via permutation test.
\newblock {\em arXiv preprint arXiv:2001.06058}, 2020.

\bibitem{Charikar2002}
Moses Charikar.
\newblock Similarity estimation techniques from rounding algorithms.
\newblock In {\em Proceedings of the thirty-fourth annual ACM symposium on
  theory of computing}, pages 2292--2300. ACM, 2002.

\bibitem{CCGMO09}
Fr{\'e}d{\'e}ric Chazal, David Cohen-Steiner, Leonidas~J. Guibas, Facundo
  M\'{e}moli, and Steve~Y. Oudot.
\newblock Gromov-hausdorff stable signatures for shapes using persistence.
\newblock {\em Computer Graphics Forum}, 28(5):1393--1403, 2009.

\bibitem{PersistenceStability2007}
David Cohen-Steiner, Herbert Edelsbrunner, and John Harer.
\newblock Stability of persistence diagrams.
\newblock {\em Discrete Comput. Geom.}, 37(1):103--120, 2007.

\bibitem{EdelsbrunnerHarer2010}
Herbert Edelsbrunner and John Harer.
\newblock {\em Computational topology: an introduction}.
\newblock American Mathematical Society, 2010.

\bibitem{EdelsLetsZomo2002}
Herbert Edelsbrunner, David Letscher, and Afra Zomorodian.
\newblock Topological persistence and simplification.
\newblock {\em Discrete Comput. Gemo.}, 28:511--533, 2002.

\bibitem{flamary2017pot}
R\'emi Flamary and Nicolas Courty.
\newblock Pot: Python optimal transport library, 2017.
\newblock URL: \url{https://pythonot.github.io/}.

\bibitem{GambleHo2010}
Jennifer Gamble and Giseon Ho.
\newblock Exploring uses of persistence homology for statistical analysis of
  landmark-based shape data.
\newblock {\em Journal of Multivariate Analysis}, 101(9):2184--2199, 2010.

\bibitem{Hiraoka7035}
Yasuaki Hiraoka, Takenobu Nakamura, Akihiko Hirata, Emerson~G. Escolar, Kaname
  Matsue, and Yasumasa Nishiura.
\newblock Hierarchical structures of amorphous solids characterized by
  persistent homology.
\newblock {\em Proceedings of the National Academy of Sciences},
  113(26):7035--7040, 2016.
\newblock URL: \url{https://www.pnas.org/content/113/26/7035}, \href
  {http://arxiv.org/abs/https://www.pnas.org/content/113/26/7035.full.pdf}
  {\path{arXiv:https://www.pnas.org/content/113/26/7035.full.pdf}}, \href
  {https://doi.org/10.1073/pnas.1520877113}
  {\path{doi:10.1073/pnas.1520877113}}.

\bibitem{IndykThaper2003}
Piotr Indyk and Nitin Thaper.
\newblock Fast image retrieval via embeddings.
\newblock In {\em 3rd international workshop on statistical and computational
  theories of vision}, volume~2, page~5, 2003.

\bibitem{Kalantari95}
Bahman Kalantari and Iraj Kalantari.
\newblock A linear-time algorithm for minimum cost flow on undirected
  one-trees.
\newblock {\em Combinatorics Advances}, pages 217--223, 1995.

\bibitem{HessNeuron}
Lida Kanari, Pawe\l{} D{\l}otko, Martina Scolamiero, Ran Levi, Julian
  Shillcock, Kathryn Hess, and Henry Markram.
\newblock A topological representation of branching neuronal morphologies.
\newblock {\em Neuroinformatics}, 16(1):3--13, 2018.
\newblock \href {https://doi.org/10.1007/s12021-017-9341-1}
  {\path{doi:10.1007/s12021-017-9341-1}}.

\bibitem{Hera}
Michael Kerber, Dimitriy Morozov, and Arnur Nigmetov.
\newblock Geometry helps to compare persistence diagrams.
\newblock {\em Journal of Experimental Algorithmics(JEA)}, 22:1--20, 2017.

\bibitem{KleinbergTardos2002}
Jon Kleinberg and Eva Tardos.
\newblock Approximation algorithms for classification problems with pairwise
  relationships: Metric labeling and markov random fields.
\newblock {\em Journal of the ACM (JACM)}, 49(5):616--639, 2002.

\bibitem{Kuhn}
Harold Kuhn.
\newblock The hungarian method for the assignment problem.
\newblock {\em Naval research logistics quarterly}, 2(1-2):83--97, 1955.

\bibitem{lacombe2018large}
Théo Lacombe, Marco Cuturi, and Steve Oudot.
\newblock Large scale computation of means and clusters for persistence
  diagrams using optimal transport, 2018.
\newblock \href {http://arxiv.org/abs/1805.08331} {\path{arXiv:1805.08331}}.

\bibitem{LYFC19}
Tam Le, Makoto Yamada, Kenji Fukumizu, and Marco Cuturi.
\newblock Tree-sliced approximation of wasserstein distances.
\newblock {\em arXiv preprint arXiv:1902.00342}, 2019.

\bibitem{LB17}
Yongjin Lee, Senja~D. Barthel, Pawe\l{} D\l{}otko, S.~Mohamad Moosavi, Kathryn
  Hess, and Berend Smit.
\newblock Quantifying similarity of pore-geometry in nanoporous materials.
\newblock {\em Nat Commun.}, 8:15396, 2017.
\newblock \href {https://doi.org/https://doi.org/10.1038/ncomms15396}
  {\path{doi:https://doi.org/10.1038/ncomms15396}}.

\bibitem{LWAPW17}
Yanjie Li, Dingkang Wang, Giorgio Ascoli, Partha Mitra, and Yusu Wang.
\newblock Metrics for comparing neuronal tree shapes based on persistent
  homology.
\newblock {\em PLOS One}, 12(8):1--24, 2017.
\newblock \href {https://doi.org/10.1371/journal.pone.0182184}
  {\path{doi:10.1371/journal.pone.0182184}}.

\bibitem{gudhi}
Cl\'ement Maria, Jean-Daniel Boissonat, Marc Glisse, and Mariette Yvinec.
\newblock The gudhi library: simplicial complexes and persistent homology,
  2014.
\newblock URL: \url{http://gudhi.gforge.inria.fr/python/latest/index.html}.

\bibitem{Dionysus}
Dimitriy Morozov.
\newblock Dionysus, 2010.
\newblock URL: \url{mrzv.org/software/dionysus}.

\bibitem{lfm-pro2007}
Ahmet Sacan, Ozgur Ozturk, Hakan Ferhatosmanoglu, and Yusu Wang.
\newblock Lfm-pro: a tool for detecting significant local structural sites in
  proteins.
\newblock {\em Bioinformatics}, 23(6):709--716, 2007.

\bibitem{SatoYamada2020}
Ryoma Sato, Makoto Yamada, and Hisashi Kashima.
\newblock Fast unbalanced optimal transport on tree.
\newblock {\em arXiv preprint arXiv:2006.02703}, 2020.

\bibitem{SOC10}
Primoz {Skraba}, Maks {Ovsjanikov}, Fr\'ed\'eric {Chazal}, and Leonidas
  {Guibas}.
\newblock Persistence-based segmentation of deformable shapes.
\newblock In {\em 2010 IEEE Computer Society Conference on Computer Vision and
  Pattern Recognition - Workshops}, pages 45--52, 2010.
\newblock \href {https://doi.org/10.1109/CVPRW.2010.5543285}
  {\path{doi:10.1109/CVPRW.2010.5543285}}.

\bibitem{ModelNet}
Zhirong Wu, Shuran Song, Aditya Khosla, Fisher Yu, Linguang Zhang, Xiaoou Tang,
  and Jianxiong Xiao.
\newblock 3d shapenets: A deep representation for volumentric shapes.
\newblock In {\em Proceedings of the IEEE conference on computer vision and
  pattern recognition}, pages 1912--1920, 2015.

\end{thebibliography}

\newpage
\appendix
\section{Additional Proofs}
\label{appendix:proofs}
\begin{proof}[Proof of observation \ref{ot_w1}]
By the optimality of $\dOT(\mu_{\widehat{\perD}}, \nu_{\widehat{\anotherperD}})$, we know that
\begin{equation*}
    \dOT(\mu_{\widehat{\perD}}, \nu_{\widehat{\anotherperD}}) \leq \dW(P, Q) + \sum_{(a, b) \in \Gamma_1} ||\pi(a) - \pi(b)||_q
\end{equation*}
where $\Gamma_1$ is the set of $(a, b) \in \Gamma$ that have first form given in definition \ref{def:dWperDV2}. We know that $||\pi(a) - \pi(b)||_q \leq ||a - b||_q$ so $\dOT(\mu_{\widehat{\perD}}, \nu_{\widehat{\anotherperD}})\leq \dW(P, Q) + \sum_{(a, b) \in \Gamma_1} ||\pi(a) - \pi(b)||_q \leq 2 \cdot \dW(P, Q)$. 
\end{proof}
\section{Datasets} 
\label{appendix:datasets}
We use datasets of synthetic persistence diagrams as well as persistence diagrams generated from real data. For persistence diagrams from real data, we use both graph and shape datasets.
\begin{itemize}
    \item \textit{Synthetic data}: We generated two sets of persistence diagrams. For the first set of synthetic persistence diagrams, we find a random persistence of size at most $s$ where $s = 10x$ for $x \in \{1, \dots, 100\}$ by generating points $p_1, \dots, p_s$. To find each points $p_i$, we sample $p_i.x$ from a uniform distribution from $0$ to $200$. We then sample $p_i.y$ from a uniform distribution between $x$ and $300$. We will refer to this set of synthetic persistence diagrams as synthetic-uniform. For the second set of synthetic persistence diagrams, we again generate a point $p_i$ by sampling $p_i.x$ from a uniform distribution from $0$ to $200$. We then sample $p_i.y$ from a Gaussian distribution centered about $x$ with standard deviation 1.0. We will refer to the second set of synthetic diagrams as synthetic-Gaussian. 
    
    \item \textit{Graphs}: For persistence diagrams generated from graphs, we use the reddit-binary graph dataset, which consists of graphs corresponding to online discussion on Reddit. In each graph, nodes correspond to users and there is an edge between two nodes if at least one of the corresponding users has responded to the other's comments. The data is taken from four popular subreddits: IAmA, AskReddit, TrollXChromosomes, and atheism. Additionally, the persistence diagrams are generated using node degree as the filtration function \cite{real_datasets}.
    
    \item \textit{Shapes}: We use the ModelNet10 \cite{ModelNet} dataset to generate persistence diagrams from shapes. ModelNet10 is comprised of 4899 CAD models from 10 object categories. The persistence diagrams are generated using closeness centrality as the filtration function \cite{real_datasets}. 
    
\end{itemize}
The statistics of the ModelNet10 and reddit-binary datasets are summarized in Table \ref{tab:diagram_statistics}. Note that a persistence point is considered close to the diagonal if its lifetime is less than one-tenth of the lifetime of the point with the largest lifetime.  

\begin{table}[ht]
    \centering
    \begin{tabular}{|c|c|c|}
        \hline
         &  Average \# of PD points &  Average \# of points near diagonal\\
        \hline
        reddit-binary & 278.155 & 20.245 \\
        \hline
        ModelNet10 & 40.52 & 34.345\\
        \hline
    \end{tabular}
    \caption{Diagram statistics for reddit-binary and ModelNet10 datasets.}
    \label{tab:diagram_statistics}
\end{table}

\section{Results}
\label{appendix:results}
To measure the accuracy of the rankings produced by the modified $L_1$ embedding and flowtree methods, we plot the true ranking of each candidate against the rank of the candidate in the rankings produced by the evaluated method. Note that we will do this using the $L_2$ norm as the ground metric. The ranking accuracies for the evaluated methods for the reddit-binary dataset and ModelNet10 are summarized in Figures \ref{fig:rankings_rbin}, \ref{fig:rankings_mn10}, \ref{fig:rankings_gaussian}, \ref{fig:rankings_unif}. The average number of ranks away from the true rank is less than 10 for both reddit-binary and synthetic-uniform whereas the same metric for synthetic-uniform and ModelNet10 is above ten for both datasets. 

Both the modified flowtree and the $L_1$ embedding approximations seem to be less effective for estimating nearest neighbor for persistence diagrams where there is a high proportion of points near the diagonal. This may be because a higher proportion of points near the diagonal increases the possibility of erroneously matching points to the diagonal. 

\begin{figure}[h]
    \centering
    \begin{subfigure}{.45\textwidth}
      \centering
      \includegraphics[width=1.1\linewidth]{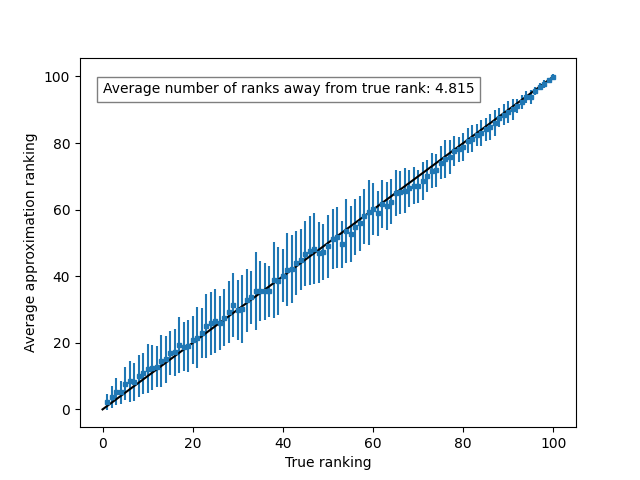}
      \caption{Flowtree rankings.}
    \end{subfigure}%
    \hfill
    \begin{subfigure}{.45\textwidth}
      \centering
      \includegraphics[width=1.1\linewidth]{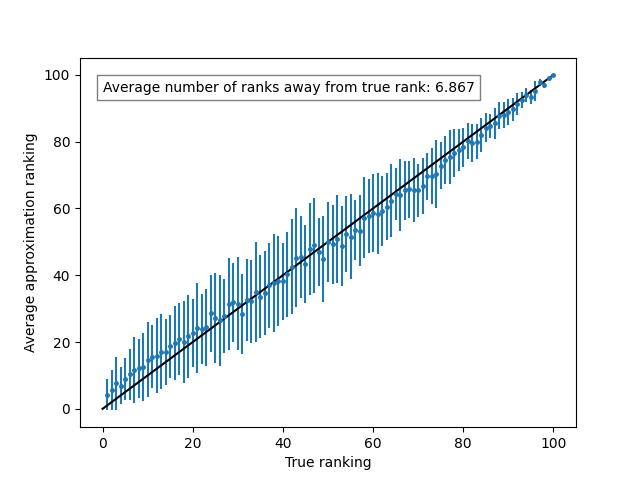}
      \caption{$L_1$ embedding rankings.}
    \end{subfigure}
    \caption{Comparison of rankings generated by the flowtree and $L_1$ embedding approximations with the true rankings of the candidate diagrams using the reddit-binary datset.}
    \label{fig:rankings_rbin}
\end{figure}

\begin{figure}
    \centering
    \begin{subfigure}{.45\textwidth}
      \centering
      \includegraphics[width=1.1\linewidth]{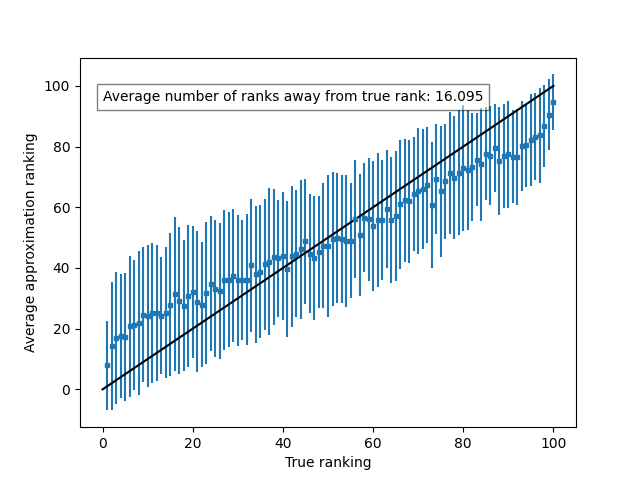}
      \caption{Flowtree rankings.}
    \end{subfigure}%
    \hfill
    \begin{subfigure}{.45\textwidth}
      \centering
      \includegraphics[width=1.1\linewidth]{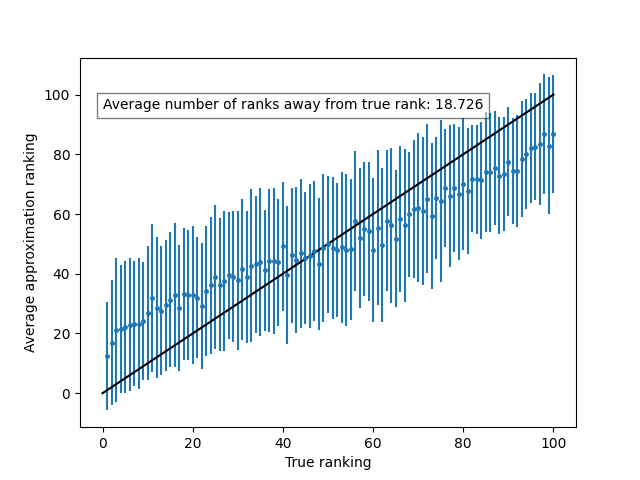}
      \caption{$L_1$ embedding rankings.}
    \end{subfigure}
    \caption{Comparison of rankings generated by the flowtree and $L_1$ embedding approximations with the true rankings of the candidate diagrams using the ModelNet10.}
    \label{fig:rankings_mn10}
\end{figure}

\begin{figure}
    \centering
    \begin{subfigure}{.45\textwidth}
      \centering
      \includegraphics[width=1.1\linewidth]{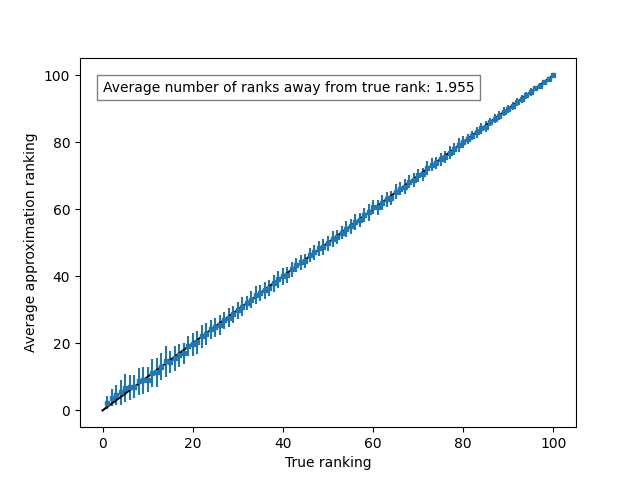}
      \caption{Modified flowtree rankings.}
    \end{subfigure}%
    \hfill
    \begin{subfigure}{.45\textwidth}
      \centering
      \includegraphics[width=1.1\linewidth]{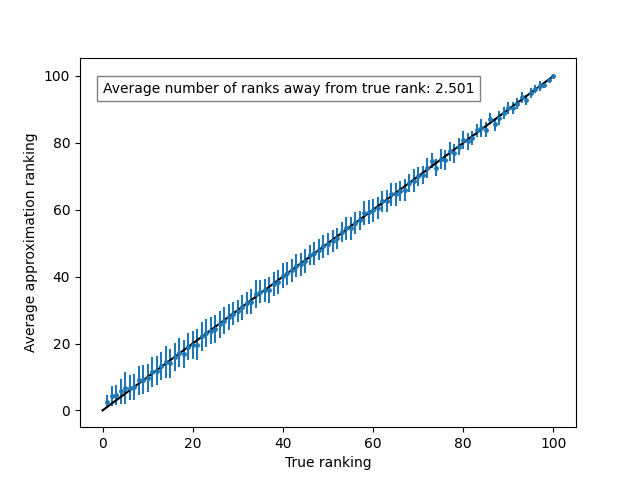}
      \caption{$L_1$ embedding rankings.}
    \end{subfigure}
    \caption{Comparison of rankings generated by the modified flowtree and $L_1$ embedding approximations with the true rankings of the candidate diagrams using synthetic-uniform dataset.}
    \label{fig:rankings_unif}
\end{figure}

\begin{figure}[h]
    \centering
    \begin{subfigure}{.45\textwidth}
      \centering
      \includegraphics[width=1.1\linewidth]{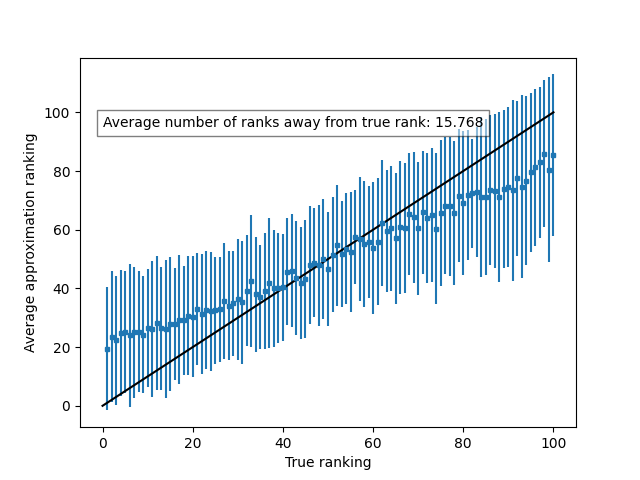}
      \caption{Modified flowtree rankings.}
    \end{subfigure}%
    \hfill
    \begin{subfigure}{.45\textwidth}
      \centering
      \includegraphics[width=1.1\linewidth]{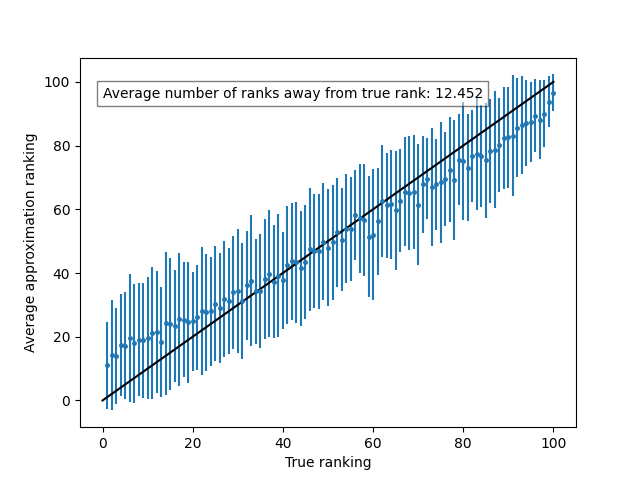}
      \caption{$L_1$ embedding rankings.}
    \end{subfigure}
    \caption{Comparison of rankings generated by the modified flowtree and $L_1$ embedding approximations with the true rankings of the candidate diagrams using synthetic-Gaussian.}
    \label{fig:rankings_gaussian}
\end{figure}

\end{document}